\newcommand{\uni}{\mathrm{uni}}
\renewcommand{\epsilon}{\varepsilon}
\def\notes{1}
 \newcommand{\TODO}[1]{\ifnum\notes=1{{\sf\color{red} [Todo: #1]}}\fi}
    \newcommand{\nrz}[1]{\ifnum\notes=1{{\sf\color{blue} [Noga: #1]}}\fi}
        \newcommand{\sk}[1]{\ifnum\notes=1{{\sf\color{blue} [Swastik: #1]}}\fi}
            \newcommand{\ssa}[1]{\ifnum\notes=1{{\sf\color{blue} [Shubhangi: #1]}}\fi}
\theoremstyle{plain}
\newtheorem{theorem}{Theorem}[section]
\newtheorem{lemma}[theorem]{Lemma}
\newtheorem{definition}[theorem]{Definition}
\newtheorem{corollary}[theorem]{Corollary}
\newtheorem{claim}[theorem]{Claim}
\newtheorem{conjecture}[theorem]{Conjecture}
\newtheorem{fact}[theorem]{Fact}
\title{Simple Constructions of Unique Neighbor Expanders \\ from Error-correcting Codes}
\author{ Swastik Kopparty\thanks{Department of Mathematics and Department of Computer Science, University of Toronto. Email:\href{mailto:swastik.kopparty@utoronto.ca}{swastik.kopparty@utoronto.ca}. Supported by an NSERC discovery grant.}
\and Noga Ron-Zewi\thanks{Department of Computer Science,
    University of Haifa. Email:
    \href{mailto:noga@cs.haifa.ac.il}{noga@cs.haifa.ac.il}. Supported in part
    by ISF grant 735/20 and by the European Union (ERC, ECCC, 101076663). Views and opinions expressed are however those of the author(s) only and do not necessarily reflect those of the European Union or the European Research Council. Neither the European Union nor the granting authority can be held responsible for them.} \and Shubhangi Saraf\thanks{Department of Mathematics and Department of Computer Science, University of Toronto. Email:\href{mailto:shubhangi.saraf@utoronto.ca}{shubhangi.saraf@utoronto.ca}. Supported by an NSERC discovery grant.}
}
\date{}
\begin{document}
 
\maketitle

\begin{abstract}

In this note, we give very simple constructions of unique neighbor expander graphs starting from spectral or combinatorial expander graphs of mild expansion. These constructions and their analysis are simple variants of the constructions of LDPC error-correcting codes from expanders, given by
Sipser-Spielman~\cite{SS96} (and Tanner~\cite{Tanner81}), and their analysis. We also show how to obtain expanders with many unique neighbors using similar ideas.

There were many exciting results on this topic recently, starting with Asherov-Dinur~\cite{AD23} and  Hsieh-McKenzie-Mohanty-Paredes~\cite{HMMP23}, who gave a similar construction of unique neighbor expander graphs, but using more sophisticated ingredients (such as almost-Ramanujan graphs) and a more involved analysis. Subsequent beautiful works of Cohen-Roth-TaShma~\cite{CRT23} and Golowich~\cite{Golowich23} gave even stronger objects (lossless expanders), but also using sophisticated ingredients. 

The main contribution of this work is that we get much more elementary constructions of unique neighbor expanders and with a simpler analysis.

\end{abstract}

\section{Introduction}\label{sec:intro}

This paper is about bipartite expander graphs, and we begin by introducing some
relevant notions of expansion.

Let $G$ be a bipartite graph with left set $L$, right set $R$, and where
all vertices in $L$ have degree $d$.
$G$ is called a $(\delta, \alpha)$-{\em (combinatorial)  expander} for $\delta, \alpha >0$ if for any subset $S \subseteq L$ of size at most $\delta |L|$,
the set $\Gamma(S)$ of neighbors of $S$ has size at least $\alpha d |S|$ (this is a constant fraction of
the maximum possible: $d|S|$). $G$ is called a {\em lossless expander} if $\alpha = 1-\epsilon$ for a small $\epsilon >0$.
Since for a small $\epsilon >0$, $\Gamma(S)$ is almost as big as $d |S|$ (which is the number of edges incident on $S$), almost every vertex in $\Gamma(S)$ has a
{\em unique} neighbor in $S$. This brings us to the definition of unique neighbor expanders.

A bipartite graph $G$ as above is called a $\delta$-\emph{unique neighbor expander} if
for any subset $S \subseteq L$ of size at most $\delta |L|$, there is some vertex in $R$ which is adjacent
to exactly one element of $S$. A stronger variant of this definition is a $(\delta, \alpha)$-\emph{unique neighbor expander}, which
asks for the existence of  $\alpha d|S|$ vertices in $R$ that are each adjacent to exactly one element of $|S|$. 
As mentioned above,   lossless expanders are (strong) unique neighbor expanders. 
In particular, the following fact is well-known. 

\begin{fact}\label{fact:comb_to_UN}
Suppose that $G$ is a $(\delta,1- \epsilon)$-combinatorial expander. Then $G$ is also a $(\delta, 1-2\epsilon)$-unique neighbor expander.
\end{fact}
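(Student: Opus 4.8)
The plan is a standard double-counting argument, with no probabilistic input needed. Fix any $S \subseteq L$ with $|S| \le \delta|L|$ and write $s = |S|$. Since every vertex of $L$ has degree exactly $d$, the number of edges leaving $S$ equals exactly $ds$. I would partition the neighborhood into two parts: the set $U \subseteq R$ of \emph{unique neighbors} of $S$ (those adjacent to exactly one vertex of $S$), and the set $M = \Gamma(S) \setminus U$ of vertices adjacent to at least two vertices of $S$. The whole point is to lower bound $|U|$, since showing $|U| \ge (1-2\epsilon)ds$ is precisely the $(\delta, 1-2\epsilon)$-unique-neighbor conclusion.

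First I would record two inequalities. Counting the $ds$ edges leaving $S$ according to their right endpoint, each vertex of $U$ absorbs exactly one such edge and each vertex of $M$ absorbs at least two, so $ds = \sum_{v \in \Gamma(S)} \deg_S(v) \ge |U| + 2|M|$, where $\deg_S(v)$ denotes the number of neighbors of $v$ lying in $S$. Second, the combinatorial expansion hypothesis with $\alpha = 1-\epsilon$ gives $|U| + |M| = |\Gamma(S)| \ge (1-\epsilon)ds$.

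Then I would combine them to eliminate $|M|$. Doubling the expansion inequality gives $2|U| + 2|M| \ge 2(1-\epsilon)ds$, and subtracting the edge-count inequality $|U| + 2|M| \le ds$ yields $|U| \ge 2(1-\epsilon)ds - ds = (1-2\epsilon)ds$. This is exactly the desired bound. (As a sanity check, for $\epsilon < \tfrac{1}{2}$ this count is strictly positive, so in particular $S$ has at least one unique neighbor, recovering the weak $\delta$-unique-neighbor property.)

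I do not expect a genuine obstacle here; the argument is elementary. The only points that require care are using that the left-degree is \emph{exactly} $d$ so that the edge count is the clean equality $ds$ (rather than an inequality), and verifying that $U$ and $M$ really do partition $\Gamma(S)$ so that $|U| + |M| = |\Gamma(S)|$ holds with equality. Everything else is linear arithmetic on the two inequalities.
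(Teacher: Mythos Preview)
Your argument is correct and is precisely the standard double-counting proof of this folklore fact. The paper itself does not supply a proof of Fact~\ref{fact:comb_to_UN}; it merely states it as ``well-known'' and moves on, so there is nothing to compare against beyond noting that your write-up is the expected one. One tiny nit: the paper's definitions of both $(\delta,\alpha)$-combinatorial expander and $(\delta,\alpha)$-UN expander quantify over non-empty $S$ with $|S| < \delta|L|$ (strict inequality), so you should match that rather than writing $|S|\le\delta|L|$; this changes nothing in the argument.
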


Note that the above fact is only meaningful when $\alpha: = 1- \epsilon > \frac 1 2$.

\subsection{History: Unique Neighbor Expanders and Codes}

Unique neighbor expanders are of significant interest in coding theory because they immediately lead to nice error-correcting codes \cite{SS96} (we will shortly elaborate on this connection). They also have applications in the construction of locally testable codes \cite{DSW06,BV09}. 
In addition to coding theory, unique neighbor expanders have found applications for routing problems in online settings and for load balancing problems \cite{PU89,  ALM96, Pip96, AC02}. See~\cite{AD23} for a discussion of several nice applications of unique neighbor expanders. For now we will focus on the application of unique neighbor expanders to codes, since it is precisely the connection to the construction of codes using expanders that inspired our construction of unique neighbor expanders. 

In their fundamental paper~\cite{SS96}, Sipser and Spielman
proposed several constructions of Low Density Parity Check (LDPC) codes 
based on sufficiently good expanders. 
We start by describing the simplest such construction from that paper, which we call {\bf SS1}.

\paragraph{SS1:} Take an \emph{unbalanced} bipartite graph $G$ with $n$ vertices on the left and
$\beta n$ vertices on the right for some $\beta \in (0,1)$. The left vertices $u$ will index codeword coordinates; we will write bits $x_u$ on these vertices. The right vertices $v$ will represent constraints on these coordinates; each right vertex $v$ will ask that the parity of all the bits $x_u$ written on the neighbors $u$
of $v$ should equal $0$. The resulting codewords $x \in \{0,1\}^n$ form a linear code 
$C[G]$, and we call $G$ the \emph{parity-check graph} of the resulting code $C[G]$.

Now if $G$ is a \emph{unique neighbor expander}, then we get that for any $x$ of low weight, there is some right vertex $v$  that is adjacent to exactly one nonzero coordinate of $x$, and thus the constraint of $v$ is violated. This implies that low weight strings $x$ cannot be codewords, and thus the code has good distance. Since $\beta < 1$, the code also has $\Omega(n)$ dimension, and thus also has good rate. Furthermore, if $G$ has bounded right degree, then the obtained codes are Low Density Parity Check (LDPC) codes, since they are defined by parity constraints involving few variables. 

\medskip

At that time, there were no explicit constructions of unique neighbor expanders, and thus the SS1 expander-to-code
transformation of Sipser and Spielman did not lead to explicit error-correcting codes. The construction of explicit unique neighbor expanders was an important open question left by their work.
Nevertheless, Sipser and Spielman gave a different construction (related to a construction of Tanner~\cite{Tanner81})
of codes from expander graphs, and it turned out that this construction required just moderate expansion, and could
thus be made explicit from what was known then. We describe this construction, which we will call {\bf SS2}, below.

\paragraph{SS2:} For this construction, we start with a bipartite  graph $G$
with $n$ vertices on the left and $\beta n$ vertices on the right (for some $\beta \in (0,1)$),
with left and right degrees $d$ and $d/\beta$, respectively, as well as a constant size linear code $C_0 \subseteq \{0,1\}^{d/\beta}$.
The codewords of the new code $C[G, C_0]$ are again formed out of bits $x_u$ written on the left vertices, but this time they have to obey the constraint that for every right vertex $v$, the 
string $(x_u)_{u \in \Gamma(v)}$ should be a codeword of $C_0$. Note that the SS1 construction  is just the special case where $C_0$ is the parity code (i.e., the codewords are all strings of even parity).

It was shown by Sipser and Spielman that if $G$ is a \emph{combinatorial expander} with only a moderate (non-lossless) expansion, and $C_0$ has sufficient distance in terms of this expansion, then the code $C[G, C_0]$ has a  good distance. The analysis is a generalization of the simple argument given above for the special case where $C_0$ is the parity code. Specifically, it can be shown that if $G$ is a combinatorial expander with moderate expansion, then for any $x$ of low weight, there is some right vertex $v$ that is adjacent to a \emph{small number} of nonzero coordinates of $x$, and thus the constraint of $v$ is violated, assuming that $C_0$ has a sufficiently large distance. 

In their paper, Sipser and Spielman also gave an alternate construction, in which $G$ is
the edge-vertex incidence graph\footnote{The \emph{edge-vertex incidence graph} of a graph $G=(V,E)$ is the bipartite graph $G'=(L\cup R, E')$, where $L=E$, $R=V$, and $(e,v) \in E'$ if $v$ is one of the endpoints of $e$ in $G$.} of a 
\emph{spectral expander} of mild expansion (i.e., a (non-bipartite) regular graph of  bounded second normalized eigenvalue). In this case, the above analysis follows using the expander mixing lemma.

\medskip

Several years later, the first explicit unique neighbor expanders that allowed an instantiation of the SS1 construction were discovered.
This was first done by Capalbo, Reingold, Vadhan and Wigderson~\cite{CRVW}, who constructed lossless expanders (which are also unique neighbor expanders by Fact \ref{fact:comb_to_UN}), with any constant imbalance $\beta  = |R|/|L|>0$, via a variant of the zig-zag product of~\cite{RVW00}.
At around the same time, Alon and Capalbo \cite{AC02} gave a much cleaner construction of unique neighbor expanders\footnote{They also gave explicit
constructions of the more difficult non-bipartite unique neighbor expanders.}, by composing a 44-regular Ramanujan graph\footnote{Ramanujan graphs are extremal graphs with the best possible spectral expansion.} with
a certain special (43, 21)-vertex bipartite graph. Because of the specialized nature of this construction, it could only give
unique neighbor expanders with imbalance $\beta= 21/22$.

Recently, interest in this topic was revived in a beautiful work of Asherov and Dinur \cite{AD23}, who abstracted out the essential features of
the Alon-Capalbo construction. Specifically, they showed how to compose an "outer" large bipartite Ramanujan graph, together with an "inner" constant sized unique neighbor expander (which can be found via brute force search),  
to get a large unique neighbor expander, with any constant imbalance $\beta >0$. 
The composition was made via the \emph{routed product}, that was already implicit in \cite{AC02}, and formally defined in \cite{AD23}. 
However, the analysis of \cite{AD23} was quite subtle, and required the full Ramanujan property (i.e, the smallest possible second normalized eigenvalue) of the outer graph.

An independent beautiful work of Hsieh-McKenzie-Mohanty-Paredes~\cite{HMMP23} gave strong unique neighbor expanders (where small sets have many unique neighbors), with expansion from both the left and the right, via  a different kind of composition called the \emph{line product}. Further, their construction just used nearly Ramanjuan graphs 
as an ingredient, and did not need the full Ramanujan property (the line product also required the outer graph to be a Cayley graph).
Subsequent work of Cohen-Roth-TaShma~\cite{CRT23} and Golowich~\cite{Golowich23} gave much stronger objects (lossless expanders), with a very elegant construction and analysis, while also using significantly more powerful pseudorandom objects as ingredients such as high-dimensional expanders.

\subsection{Our result: Unique Neighbor Expanders from Codes}

The main focus of this paper is to give {\it simple} and {\it elementary} constructions of unique neighbor expanders. 

The starting point for our construction is the routed product of \cite{AC02, AD23}. 
As pointed out in \cite{AD23}, the routed product has a simple coding-theoretic interpretation.
Specifically, the routed product 
 $G_{\mathrm{out}} \circ G_{\mathrm{in}}$  
 of an outer graph $G_{\mathrm{out}}$ with an inner graph
$G_{\mathrm{in}}$
 is simply the parity-check graph of the SS2 construction
 $C[G,C_0]$, where  $G = G_{\mathrm{out}}$ and $C_0 = C[G_{\mathrm{in}}]$. 
As noted above,
\cite{AD23} chose the inner graph $G_{\mathrm{in}}$  to be a constant-size unique neighbor expander, and the outer graph $G_{\mathrm{out}}$ to be a  large Ramanujan graph, to obtain a large unique neighbor expander $G_{\mathrm{out}} \circ G_{\mathrm{in}}$. Furthermore, the analysis of \cite{AD23} was surprisingly subtle, and required the full Ramanujan property out of $G_{\mathrm{out}}$.

 In this work, we observe that the outer graph $G_{\mathrm{out}}$ can in fact  be chosen exactly as in the SS2 construction, namely, either 
  the edge-vertex incidence graph of a spectral expander with mild expansion, or a combinatorial expander with moderate expansion. Under either of these choices, we have that $C_0 = C[G_{\mathrm{in}}]$ is a good code by the SS1 analysis, and consequently  $C[G_{\mathrm{out}},C_0]$ is a good code by the SS2 analysis. We further observe that exactly the same analysis as that of the SS2 construction in fact shows something stronger: That $G_{\mathrm{out}} \circ G_{\mathrm{in}}$, which is the parity-check graph of $C[G_{\mathrm{out}},C_0]$, is a unique neighbor expander (which in particular implies that $C[G_{\mathrm{out}},C_0]$ is a good code by the SS1 analysis).
  
  Thus the old question of Sipser and Spielman about explicitly constructing unique neighbor expanders, required to instantiate the SS1 construction, can be solved by exactly the same methods as the SS2 construction!
 We further show that a slight strengthening of the SS2 analysis gives unique neighbor expanders with many unique neighbors. 

In more detail, in our first construction, given in Section \ref{sec:spectral}, we 
use as an outer graph the edge-vertex incidence graph of a spectral expander with a sufficiently small constant second normalized eigenvalue. 
Such graphs can be explicitly constructed for example by taking graph powers of an infinite family of regular spectral expanders of some constant second normalized eigenvalue (for example those of  Gabber and Galil \cite{GG81}), or, alternatively, 
using (a simple version of) the zig-zag product \cite{RVW00} (see also the survey \cite{Vadhan-survey}, Section 4.3). We show that this choice of an outer graph gives an explicit construction of an arbitrarily unbalanced strong unique neighbor expander (see Theorem \ref{thm:spectral_UN}).

In our second construction, given in Section \ref{sec:comb}, we use as an outer graph a biregular $(\delta, \alpha)$-combinatorial expander for some constant $\delta, \alpha>0$, and of sufficiently small imbalance $|R|/|L|$. We show that this choice leads to an explicit construction of an arbitrarily unbalanced $(\delta,\alpha')$-unique neighbor expanders with \emph{roughly the same expansion} $\alpha' = (1-\epsilon) \alpha$, for an arbitrarily small constant $\epsilon >0$ (see Theorem \ref{thm:comb_UN}). This can be viewed as an extension of the simple transformation from combinatorial to unique neighbor expanders, given by Fact \ref{fact:comb_to_UN}, to the regime of $\alpha < \frac 12$. 

As pointed out already by Sipser and Spielman,
one can explicitly construct biregular $(\delta, \alpha)$-combinatorial expanders for some constant 
$\delta, \alpha >0$, and an arbitrarily small imbalance $|R|/|L|$, by taking the edge-vertex incidence graph of a spectral expander of mild expansion, in which case our second construction reduces to our first one (though the analysis is a bit different). 
Beyond that, we are only aware of more sophisticated constructions of lossless combinatorial expanders \cite{CRVW,CRT23,Golowich23}, which to the best of our knowledge, are not biregular.
Our construction motivates the search for other simple constructions of biregular combinatorial expanders, as these will also lead to simple constructions of unique neighbor expanders of roughly the same expansion.

\section{Preliminaries}\label{sec:prelim}

In this section, we first provide the formal definitions of the types of expanders that will shall consider in this work: spectral, combinatorial, and unique neighbor expanders, and state known constructions of such graphs that we shall use in our constructions.
Then we formally define the routed product that will be used for composing these expanders. See \cite{HLW06} for an extensive survey of expander graphs and their applications in theoretical computer science.

\subsection{Spectral expander}

We start with the definition of  a spectral expander, which is a (non-bipartite) regular graph with bounded second eigenvalue.

\begin{definition}[Spectral expander]
We say that a $d$-regular graph $G$ on $n$ vertices is a $\lambda$-\emph{(spectral) expander} if $\max\{|\lambda_i|, \lambda_i\neq \pm d \}  \leq \lambda$, where  $d=\lambda_1\geq\lambda_2\geq \ldots \geq \lambda_n$ are the eigenvalues of the
adjacency matrix of $G$.
\end{definition}

For a graph $G=(V,E)$, and subsets $S,T \subseteq V$, we let $E(S,T)$ denote the set of edges with one endpoint in $S$ and another endpoint in $T$, and we let $|E(S,T)|$ denote the number of these edges, with the edges in $(S\cap T) \times (S\cap T)$ counted twice. A commonly used property of spectral expanders (and the only property of spectral expanders we shall use) is given by the \emph{expander mixing lemma} of Alon and Chung \cite{AC06}. This lemma says that for a $d$-regular graph $G=(V,E)$, and for any pair of subsets $S,T \subseteq V$, $|E(S,T)|$ is roughly equal the expected number of edges between $S$ and $T$ in a random $d$-regular graph.

\begin{theorem}[Expander Mixing Lemma, \cite{AC06}] \label{thm:eml}
Suppose that $G = (V,E)$ is a $d$-regular graph that is a $\lambda$-spectral expander. 
Then for any pair of subsets $S, T \subseteq V$,
$$\left|E(S,T)- \frac{d}{n}|S||T| \right| \leq \lambda \sqrt{ |S||T| }.$$
\end{theorem}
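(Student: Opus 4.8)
The plan is to express $\inabs{E(S,T)}$ as a quadratic form in the adjacency matrix of $G$ and then expand in an orthonormal eigenbasis. Let $A$ be the adjacency matrix and let $\ind{S}, \ind{T} \in \R^V$ denote the indicator vectors of $S$ and $T$. The counting convention in the statement---where edges inside $S \cap T$ are counted twice---is precisely what makes the clean identity
$$\inabs{E(S,T)} = \ind{S}^\top A \,\ind{T}$$
hold, so the whole claim reduces to estimating this single bilinear form.

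Next I would diagonalize $A$. Since $G$ is $d$-regular, the all-ones vector $\mathbf{1}$ is an eigenvector with eigenvalue $\lambda_1 = d$; set $v_1 = \mathbf{1}/\sqrt{n}$ and extend to an orthonormal eigenbasis $v_1, \dots, v_n$ with $A v_i = \lambda_i v_i$. Writing $\ind{S} = \sum_i a_i v_i$ and $\ind{T} = \sum_i b_i v_i$, the top coefficients are $a_1 = \ip{\ind{S}}{v_1} = |S|/\sqrt{n}$ and $b_1 = |T|/\sqrt{n}$. Substituting,
$$\ind{S}^\top A\, \ind{T} = \sum_{i=1}^n \lambda_i a_i b_i = \frac{d}{n}|S||T| + \sum_{i=2}^n \lambda_i a_i b_i,$$
which already splits off the claimed main term $\frac{d}{n}|S||T|$ and leaves the tail over $i \geq 2$ as the error.

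It then remains to bound this tail. Using $\inabs{\lambda_i} \leq \lambda$ for all $i \geq 2$, followed by Cauchy--Schwarz and Parseval,
$$\inabs{\sum_{i=2}^n \lambda_i a_i b_i} \leq \lambda \sum_{i \geq 2}\inabs{a_i}\,\inabs{b_i} \leq \lambda \sqrt{\sum_{i\geq 2} a_i^2}\,\sqrt{\sum_{i\geq 2} b_i^2} \leq \lambda\, \twonorm{\ind{S}}\,\twonorm{\ind{T}} = \lambda \sqrt{|S||T|},$$
where the last equality uses $\twonorm{\ind{S}}^2 = |S|$ and $\twonorm{\ind{T}}^2 = |T|$. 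Combining with the previous display yields exactly the asserted inequality.

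I expect the work here to be essentially bookkeeping rather than hard analysis. The two points to be careful about are: first, checking that the ``counted twice on $S \cap T$'' convention is exactly what symmetrizes $\inabs{E(S,T)}$ into $\ind{S}^\top A\, \ind{T}$, since any other convention would perturb the diagonal contribution of the quadratic form; and second, the definition of a $\lambda$-expander excludes eigenvalues equal to $\pm d$, so the step $\inabs{\lambda_i}\le\lambda$ tacitly requires that $-d$ is not itself an eigenvalue---equivalently that $G$ is non-bipartite---which is the regime in which the lemma is meaningful and which holds for the mildly expanding graphs used in this paper.
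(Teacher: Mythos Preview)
The paper does not supply its own proof of this theorem: it is stated as the classical Expander Mixing Lemma and attributed to Alon and Chung~\cite{AC06}, so there is no in-paper argument to compare against. Your proposal is the standard spectral proof---write $\inabs{E(S,T)}=\ind{S}^\top A\,\ind{T}$, peel off the $v_1=\mathbf{1}/\sqrt n$ component to get the main term, and bound the tail by Cauchy--Schwarz---and it is correct as written. Your two caveats are on point: the double-counting convention on $S\cap T$ is exactly what makes the bilinear-form identity exact, and the paper's definition of a $\lambda$-expander indeed excludes eigenvalues $\pm d$, so the bound $\inabs{\lambda_i}\le\lambda$ for $i\ge 2$ needs $-d\notin\mathrm{spec}(A)$; this is automatic for the non-bipartite spectral expanders the paper actually uses (Theorem~\ref{thm:spectral_explicit}).
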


For our construction, we shall only need explicit constructions of infinite families of $d$-regular graphs which are $(\delta d)$-expanders for an arbitrarily small constant $\delta >0$, and an arbitrarily large degree $d$. Such graphs can be obtained for example by taking graph powers of an infinite family of $d$-regular $(\delta d)$-spectral expanders for some fixed non-negative integer $d$ and fixed $\delta >0$ (for example those of  Gabber and Galil \cite{GG81}), or, alternatively, 
using (a simple version of) the zig-zag product \cite{RVW00} (see also the survey \cite{Vadhan-survey}, Section 4.3).

\begin{theorem}[Explicit spectral expanders]\label{thm:spectral_explicit}
For any $\delta >0$ and non-negative integer $d_0$, there exists a non-negative integer $d \geq d_0$ so that there exists an explicit construction of an infinite family $\{G_n\}_n$ of graphs, where $G_n$ is a $d$-regular graph on $n$ vertices that is a $(\delta d)$-spectral expander.
\end{theorem}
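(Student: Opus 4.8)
The plan is to obtain the desired family by taking a fixed power of a single explicit expander family of constant degree and constant spectral gap. Start from any explicit infinite family $\{H_n\}_n$ of $D$-regular graphs that are $\lambda_0$-spectral expanders, where $D$ and $\lambda_0 = \rho D$ with $\rho < 1$ are absolute constants independent of $n$; the Gabber--Galil expanders \cite{GG81} furnish such a family. For a parameter $k$ to be fixed later, define $G_n := H_n^k$ to be the $k$-th power of $H_n$, i.e. the (multi)graph on the same vertex set whose adjacency matrix is $A(H_n)^k$, where $A(H_n)$ is the adjacency matrix of $H_n$. Powering is clearly computable in polynomial time from an explicit description of $H_n$, so the family $\{G_n\}_n$ is explicit.

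I would then carry out the spectral bookkeeping. Since $A(G_n) = A(H_n)^k$, the eigenvalues of $G_n$ are exactly the $k$-th powers $\lambda_i^k$ of the eigenvalues $\lambda_i$ of $H_n$. The all-ones vector satisfies $A(H_n)\mathbf{1} = D\,\mathbf{1}$, hence $A(G_n)\mathbf{1} = D^k\,\mathbf{1}$, so $G_n$ is $D^k$-regular (as a multigraph with loops, which is the relevant notion, since regularity here just means a constant row sum of the adjacency matrix). Set $d := D^k$. The trivial eigenvalue $D$ of $H_n$ maps to $D^k = d$, and if $-D$ is an eigenvalue it maps to $(-D)^k = \pm d$; in either case these are precisely the eigenvalues $\pm d$ excluded in the definition of a spectral expander. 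Every remaining eigenvalue $\lambda_i$ of $H_n$ satisfies $|\lambda_i| \leq \lambda_0 = \rho D$, and therefore maps to an eigenvalue of $G_n$ with $|\lambda_i^k| \leq (\rho D)^k = \rho^k d$. Consequently $G_n$ is a $(\rho^k d)$-spectral expander.

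Finally I would choose $k$. Because $\rho < 1$ and $D \geq 2$, one can pick $k$ large enough that simultaneously $\rho^k \leq \delta$ and $D^k \geq d_0$. With this $k$ fixed, $d = D^k \geq d_0$ and every nontrivial eigenvalue of $G_n$ has absolute value at most $\rho^k d \leq \delta d$, so each $G_n$ is a $(\delta d)$-spectral expander, as required.

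I do not expect a genuine obstacle here; the only point demanding care is the spectral bookkeeping, namely confirming that the two trivial eigenvalues $\pm D$ map exactly onto the excluded values $\pm d$ (so that powering cannot inadvertently promote a small eigenvalue to a large one, nor leave a large eigenvalue unaccounted for), and noting that powering introduces self-loops and multi-edges, so that $G_n$ must be read as a multigraph whose regularity and spectral-expander properties are governed by the adjacency-matrix eigenvalues. An essentially identical conclusion can be reached via a simple version of the zig-zag product \cite{RVW00, Vadhan-survey} in place of powering, if one prefers that route, but the powering argument above is the most direct.
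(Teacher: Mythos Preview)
Your proposal is correct and follows exactly the approach the paper indicates: the paper does not give a detailed proof of this theorem but states it as a known fact, noting just before the statement that such families ``can be obtained for example by taking graph powers of an infinite family of $d$-regular $(\delta d)$-spectral expanders \ldots\ (for example those of Gabber and Galil~\cite{GG81}), or, alternatively, using (a simple version of) the zig-zag product~\cite{RVW00}.'' Your write-up simply spells out the powering argument, including the spectral bookkeeping and the multigraph caveat, and even mentions the zig-zag alternative, so it matches the paper's intended justification.
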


To turn the above (non-bipartite) spectral expanders into bipartite graphs, we shall consider the \emph{edge-vertex incidence graph}, defined as follows.

\begin{definition}[Edge-vertex incidence graph]
The \emph{edge-vertex incidence graph} of a graph $G=(V,E)$ is the bipartite graph $G'=(L\cup R, E')$, where $L=E$, $R=V$, and $(e,v) \in E'$ if $v$ is one of the endpoints of $e$ in $G$. 
\end{definition}

\subsection{Combinatorial expander}

A combinatorial expander is a graph, where any not too large subset has many neighbors. In this work, we will only consider the bipartite one-sided version of such expanders, where only left subsets expand. 

We say that a bipartite graph $G=(L \cup R, E)$ is $d$-\emph{left regular} if $\deg(v) = d$ for any $v \in L$, and we say that $G$ is ($d_1, d_2$)-\emph{regular} if $\deg(v)=d_1$ for any $v \in L$ and $\deg(v)=d_2$ for any $v\in R $. We say that $G$ is \emph{$\beta$-unbalanced} if $|R| \leq \beta |L|$. 
For a vertex $v \in L \cup R$, let $\Gamma_G(v)$ denote the set of vertices adjacent to $v$, and for a subset $S \subseteq L$, let $\Gamma_G(S)$ denote the set of vertices adjacent to some vertex in $S$, i.e., $\Gamma_G(S) = \bigcup_{v\in S} \Gamma_G(v)$.  We shall sometimes omit the subscript $G$
if the graph $G$ is clear from the context. 

\begin{definition}[Combinatorial expander]
We say that a $d$-left regular bipartite graph
 $G = (L \cup R, E)$  is a
$(\delta,\alpha)$-\emph{(combinatorial) expander} if  $|\Gamma(S)| \geq \alpha d |S|$ for any non-empty subset $S \subseteq L$ with $|S| < \delta |L|$. 
\end{definition}

For our construction, we shall only need explicit constructions of infinite families of $\beta$-unbalanced $(\delta, \alpha)$-expanders for some fixed constants $\delta, \alpha >0$, and an arbitrarily small constant $\beta>0$.   Such constructions can be obtained for example by taking the edge-vertex incidence graph of the spectral expanders given in Theorem \ref{thm:spectral_explicit} \cite{SS96}.

\subsection{Unique neighbor expanders}

A unique neighbor expander is a strengthening of a combinatorial expander, where any not too large subset has many \emph{unique}  neighbors. Once more, we will only consider the bipartite one-sided version of such expanders.

Let $G = (L \cup R, E)$ be a bipartite graph.
We say that a vertex $v \in R$ is a \emph{unique neighbor} of a subset $S \subseteq L$ if $v$ is adjacent to exactly one vertex in $S$, and for a subset $S \subseteq L$, we let $\Gamma^{\uni}_G(S)$ denote the subset of all vertices $v \in R$ which are unique neighbors of $S$.  Once more, we shall sometimes omit the subscript $G$
if the graph $G$ is clear from the context. 

\begin{definition}[Unique neighbor (UN) expander]
We say that a $d$-left regular
bipartite graph
$G = (L \cup R, E)$ is a $\delta$-\emph{unique neighbor} (UN) \emph{expander} if $\Gamma^{\uni}(S) \neq \emptyset$ for any non-empty subset $S \subseteq L$ with $|S| <\delta |L|$. 
We say that $G$ is a 
$(\delta,\alpha)$-\emph{unique neighbor} (UN) \emph{expander} if $|\Gamma^{\uni}(S)| \geq \alpha d |S|$ for any non-empty subset $S \subseteq L$ with $|S| < \delta |L|$. 
\end{definition}

It follows by definition, that any $(\delta, \alpha)$-UN expander is also a $(\delta,\alpha)$-combinatorial expander. Conversely, it is not hard to show that any $(\delta, 1-\epsilon)$-combinatorial expander is also a $(\delta, 1-2\epsilon)$-UN expander. Note however that this latter implication is
only meaningful when $\epsilon < \frac 1 2$.

For our construction, we shall need (possibly non-explicit) arbitrarily unbalanced unique neighbor expanders. Such graphs (even with expansion arbitrarily close to $1$) can be shown to exist using the probabilistic method 
(see e.g., \cite[Lemma 11.2.3]{GRS-coding-book} \footnote{\cite[Lemma 11.2.3]{GRS-coding-book} actually shows the existence of combinatorial expanders with expansion arbitrarily close to $1$, but by Fact \ref{fact:comb_to_UN}, this also implies the existence of unique neighbor expanders with expansion arbitrarily close to $1$.}).

\begin{lemma}[Non-explicit unique neighbor expanders]\label{lem:UN_random}
For any $\epsilon > 0$ and $\beta \in (0,1]$, there exist $\delta  >0$ and non-negative integers $d$ and $n_0$, so that for all $n \geq n_0$, there exists a $d$-left regular $\beta$-unbalanced bipartite graph $G_n$ with $n$ left vertices that is a $(\delta, 1-\epsilon)$-UN expander.
\end{lemma}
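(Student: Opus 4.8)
The plan is to use the probabilistic method, choosing a random $d$-left regular bipartite graph and showing that with positive probability every small left set expands almost losslessly. By Fact~\ref{fact:comb_to_UN}, it suffices to construct a $(\delta, 1-\epsilon/2)$-combinatorial expander, since such a graph is automatically a $(\delta, 1-\epsilon)$-UN expander. So I would reduce the target to building a combinatorial expander with expansion $1 - \epsilon' $ where $\epsilon' := \epsilon/2$.

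First I would fix the parameters. Set $|L| = n$ and $|R| = \lceil \beta n \rceil$. Choose the left-degree $d = d(\epsilon,\beta)$ to be a sufficiently large constant (independent of $n$), and let $\delta > 0$ be a small constant to be determined. Sample the graph $G_n$ by giving each of the $n$ left vertices $d$ independent uniformly random neighbors in $R$ (the standard random bipartite model); this makes $G_n$ exactly $d$-left regular. The goal is to bound the probability that some \emph{bad} set $S \subseteq L$ with $|S| = s \le \delta n$ fails to expand, i.e.\ $|\Gamma(S)| < (1-\epsilon') d s$.

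Next I would carry out the first-moment (union-bound) estimate. A set $S$ of size $s$ fails to expand only if its $ds$ random edges land in a target set $T \subseteq R$ of size at most $(1-\epsilon')ds$; the probability of this, for a fixed $T$, is at most $\left(|T|/|R|\right)^{ds}$, and then I union-bound over choices of $S$ and $T$. Using $\binom{n}{s} \le (en/s)^s$ and $\binom{|R|}{|T|} \le (e|R|/|T|)^{|T|}$, the total failure probability is bounded by a sum over $s$ of a term of the form $\left[ (en/s)\cdot (\text{poly in } s/n, \beta, \epsilon')^{\,d\epsilon'} \right]^{s}$, and for $d$ large enough as a function of $\epsilon'$ and $\beta$ the exponent $d\epsilon'$ makes the bracketed quantity strictly less than $1$ once $s/n \le \delta$ for a suitable small $\delta$. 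Summing the resulting geometric-type series over $s$ from $1$ to $\delta n$ gives total failure probability $< 1$, so a good graph $G_n$ exists for every $n \ge n_0$.

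The main obstacle is the bookkeeping in the union bound: one must verify that the entropy terms $\binom{n}{s}$ and $\binom{|R|}{|T|}$ coming from the choice of $S$ and $T$ are dominated by the concentration factor $(|T|/|R|)^{ds}$ uniformly over the whole range $1 \le s \le \delta n$, and in particular that the small-$s$ regime (where $en/s$ is large) is controlled by taking $d$ large and $\delta$ small. This is exactly the argument of \cite[Lemma~11.2.3]{GRS-coding-book}, so rather than reproduce the calculation I would cite it: that lemma already produces $(\delta,1-\epsilon')$-combinatorial expanders with the stated left-regularity and $\beta$-unbalance, and Fact~\ref{fact:comb_to_UN} upgrades them to $(\delta,1-\epsilon)$-UN expanders, yielding the lemma. (Since we only need \emph{existence}, and not even the quantitative UN version, invoking the cited bound and Fact~\ref{fact:comb_to_UN} is the cleanest route.)
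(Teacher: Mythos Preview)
Your proposal is correct and matches the paper's approach exactly: the paper does not give a self-contained proof of this lemma but simply cites \cite[Lemma~11.2.3]{GRS-coding-book} for the existence of $(\delta,1-\epsilon')$-combinatorial expanders and then invokes Fact~\ref{fact:comb_to_UN} to upgrade to unique neighbor expansion, which is precisely the route you take (with some additional, but optional, elaboration of the underlying probabilistic argument).
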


\subsection{The routed product}

In this work, we will obtain explicit constructions of unique neighbor expanders by combining "outer graphs" which are either the edge-vertex incidence graph of explicit spectral expanders, or explicit combinatorial expanders, with  "inner graphs" which are non-explicit unique neighbor expanders,
via the \emph{routed product}, defined as follows (this product first appeared implicitly in \cite{AC02}, and formally defined later in \cite{AD23}).

\begin{definition}[The routed product \cite{AC02, AD23}] Let $G=(L \cup R,E)$ be an \emph{outer}  $(d_1,d_2)$-regular  bipartite graph, and let $G'=(L' \cup R', E')$
 be an \emph{inner} $d'$-left regular
 bipartite graph with $L'=[d_2]$. Furthermore, for any $v \in R$, assume an ordering  on the $d_2$ edges of $E$ incident to $v$ in $G$. 
 
 The \emph{routed product}  $G \circ G'$  is the 
 bipartite graph  $G \circ G':=(L'' \cup R'', E'')$, where 
$L'' = L$, $R'' = R \times R'$, and $(u, (v,v')) \in E''$ 
if  there exists $i \in [d_2]$ so that $(u,v)$
is the $i$-th edge incident to $v$ in $G$ and $(i,v') \in E'$. 
\end{definition}

It follows by definition, that $G \circ G'$ is $(d_1 \cdot d')$-left regular, and that if $G'$ is $\beta'$-unbalanced, then $G \circ G'$ is 
$ \tilde \beta$-unbalanced for 
$$ \tilde \beta = \frac{|R| \cdot |R'|} {|L|} = \frac{|R| \cdot |R'|} {d_2 \cdot |R|/d_1 }= \frac{d_1 \cdot |R'|} { |L'|} =d_1 \cdot \beta'.$$

\section{Unique Neighbor Expanders from Spectral Expanders}\label{sec:spectral}

In this section, we will show how to obtain a 
 simple explicit construction of unique neighbor expanders by combining "outer graphs" which are the edge-vertex incidence graph of explicit spectral expanders with  "inner graphs" which are non-explicit unique neighbor expanders via the routed product. 
 
 To this end, we first show  in Section \ref{subsec:spectral_one_UN}, as a warmup, how to obtain explicit unique neighbor expanders that only guarantee the existence of a  \emph{single} unique neighbor for any not too large left subset, using the analysis of error-correcting codes that is implicit in \cite{SS96}. Then in Section \ref{subsec:spectral_more_UN}, we show a slight strengthening of the analysis that guarantees a \emph{constant fraction} of unique neighbors. Finally, in Section \ref{subsec:spectral_instant}, we instantiate our transformation with the explicit spectral expanders given by Theorem \ref{thm:spectral_explicit}, and the non-explicit unique neighbor expanders given by Lemma \ref{lem:UN_random}, to obtain explicit unique neighbor expanders.

\subsection{One unique neighbor}\label{subsec:spectral_one_UN}

The following lemma says that the routed product of an outer edge-vertex incidence graph of a spectral expander with an inner unique neighbor expander yields a unique neighbor expander, where both unique neighbor expanders guarantee the existence of only a \emph{single} unique neighbor for any not too large left subset.

\begin{lemma}\label{lem:spectral_one_UN_comp}
Suppose that $G= (L \cup R, E)$ is the edge-vertex incidence graph of a $d$-regular graph that is a $\lambda$-spectral expander, and that $G' = (L' \cup R', E')$ is a $d'$-left regular bipartite graph that is a
$\delta'$-UN expander with $L'=[d]$. Then $G\circ G'=(L'' \cup R'', E'')$ 
is a $( \delta'  (\delta' - \frac \lambda d))$-UN expander.
\end{lemma}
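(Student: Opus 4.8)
The plan is to reduce the unique-neighbor property of $G \circ G'$ to a purely local statement about the inner graph $G'$, and then use the expander mixing lemma to control the global structure of a hypothetical counterexample. First I would record the parameters: since $G$ is the edge-vertex incidence graph of a $d$-regular graph $H = (V, E_H)$ on $n$ vertices, it is $(2,d)$-regular with $L = E_H$ and $R = V$, so $|L''| = |L| = nd/2$ and $L' = [d]$ exactly as required. For a fixed $S \subseteq L'' = E_H$ and a vertex $v \in V$, define the \emph{local view} $S_v \subseteq [d]$ to be the set of indices $i$ such that the $i$-th edge incident to $v$ lies in $S$; thus $|S_v|$ equals the number of edges of $S$ meeting $v$, and $T := \{v : S_v \neq \emptyset\}$ is the set of endpoints of edges in $S$.

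The key observation is that for $(v,v') \in R'' = V \times R'$, the elements of $S$ adjacent to $(v,v')$ in $G \circ G'$ are precisely those edges appearing at index $i \in S_v \cap \Gamma_{G'}(v')$, so $(v,v')$ is a unique neighbor of $S$ exactly when $v'$ is a unique neighbor of $S_v$ in $G'$. Consequently, since $G'$ is a $\delta'$-UN expander with $|L'| = d$, it suffices to exhibit a single vertex $v$ with $0 < |S_v| < \delta' d$: the inner expansion then produces the desired $v'$, and $(v,v')$ is a unique neighbor of $S$.

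It therefore remains to show that every non-empty $S$ with $|S| < \delta'(\delta' - \lambda/d)|L''|$ admits such a vertex, which I would argue by contradiction. Suppose every $v \in T$ satisfies $|S_v| \geq \delta' d$. Double counting edge-endpoint incidences gives $2|S| = \sum_v |S_v| \geq \delta' d\,|T|$, hence $|T| \leq 2|S|/(\delta' d)$. On the other hand, every edge of $S$ has both endpoints in $T$, so $|S|$ is at most $e(T)$, the number of edges of $H$ inside $T$; applying the expander mixing lemma (Theorem~\ref{thm:eml}) with both subsets equal to $T$ and using $|E(T,T)| = 2e(T)$ yields $|S| \leq \frac{d}{2n}|T|^2 + \frac{\lambda}{2}|T|$.

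Substituting the bound on $|T|$ into this inequality and dividing through by $|S|$ should give $1 \leq \frac{2|S|}{n\delta'^2 d} + \frac{\lambda}{\delta' d}$, which rearranges (using $nd = 2|L''|$) to $|S| \geq \delta'(\delta' - \lambda/d)|L''|$, contradicting the size hypothesis on $S$. The main obstacle is setting up the local reduction in the first two paragraphs correctly — carefully matching the indexing conventions of the routed product so that unique neighbors of $S$ in $G \circ G'$ correspond exactly to unique neighbors of the local views $S_v$ in $G'$ — after which the mixing-lemma estimate is routine and the constants fall out precisely as claimed.
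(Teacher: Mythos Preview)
Your proposal is correct and follows essentially the same approach as the paper: the paper factors the expander-mixing-lemma computation into a separate lemma (showing that if every endpoint of $S$ meets at least $\delta' d$ edges of $S$ then $|S| \geq \delta'(\delta'-\lambda/d)|L|$), and then uses it exactly as you do to find a vertex $v$ with a small nonempty local view $S_v$ and pull a unique neighbor out of $G'$. Your inline version simply reverses the order of the two inequalities (upper-bounding $|T|$ by double counting first, then bounding $|S|\le e(T)$ via the mixing lemma and substituting), but the ingredients and constants are identical.
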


The following lemma is implicit in the analysis of error-correcting codes of \cite{SS96}. Its proof is an easy consequence of the expander mixing lemma (Theorem \ref{thm:eml}). For completeness, and since in the next section we shall need a slight strengthening of this lemma, we provide a full proof of this lemma below. 

\begin{lemma}\label{lem:spectral_one_UN}
Let $G=(V,E)$ be a $d$-regular graph that is a $\lambda$-spectral expander, and let $S \subseteq E$ be a subset of edges in $G$. Suppose that for any $v \in V$ that is incident to some edge in $S$, it holds that $v$ is incident to at least $\delta  d$ edges in $S$. Then $|S| \geq \delta (\delta -  \frac \lambda d) |E|$.
\end{lemma}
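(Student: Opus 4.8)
The plan is to apply the Expander Mixing Lemma (Theorem~\ref{thm:eml}) to a single, carefully chosen vertex set. Let $n = |V|$, so that $|E| = dn/2$, and let $T \subseteq V$ be the set of vertices incident to at least one edge of $S$; assuming $S \neq \emptyset$ we have $T \neq \emptyset$. The key structural observation is that every edge of $S$ has \emph{both} of its endpoints in $T$ (each endpoint is incident to that edge, hence lies in $T$). Consequently, for every $v \in T$, the at least $\delta d$ edges of $S$ incident to $v$ all go to other vertices of $T$, so $v$ has at least $\delta d$ neighbors inside $T$. In other words, the hypothesis says precisely that $T$ induces a subgraph of minimum degree at least $\delta d$.

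Next I would use this density to force $T$ to be large, via the mixing lemma. Writing $|E(T,T)|$ for the number of edges inside $T$ counted with the doubling convention of the excerpt, the minimum-degree bound just established gives $|E(T,T)| = \sum_{v \in T} \deg_T(v) \geq \delta d\,|T|$, where $\deg_T(v)$ denotes the number of neighbors of $v$ lying in $T$. On the other hand, Theorem~\ref{thm:eml} applied to the pair $(T,T)$ yields the upper bound $|E(T,T)| \leq \frac{d}{n}|T|^2 + \lambda |T|$. Chaining these two inequalities and dividing through by $|T| > 0$ gives $\delta d \leq \frac{d}{n}|T| + \lambda$, that is, $|T| \geq n\left(\delta - \frac{\lambda}{d}\right)$.

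Finally I would translate the lower bound on $|T|$ back into one on $|S|$. Counting each edge of $S$ from both of its (in-$T$) endpoints, $\sum_{v \in T} \deg_S(v) = 2|S|$, while by hypothesis each summand is at least $\delta d$; hence $2|S| \geq \delta d\,|T|$. Combining with the bound on $|T|$,
$$|S| \;\geq\; \frac{\delta d}{2}\,|T| \;\geq\; \frac{\delta d}{2}\cdot n\left(\delta - \frac{\lambda}{d}\right) \;=\; \delta\left(\delta - \frac{\lambda}{d}\right)\cdot\frac{dn}{2} \;=\; \delta\left(\delta - \frac{\lambda}{d}\right)|E|,$$
as desired.

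The main thing to get right is the \emph{direction} in which the mixing lemma is used, together with the double-counting bookkeeping: the minimum-degree hypothesis supplies a \emph{lower} bound on the number of internal edges $|E(T,T)|$, which must be paired with the \emph{upper}-bound side of the mixing lemma in order to conclude that $T$ cannot be too small (using the wrong side of Theorem~\ref{thm:eml} produces only a vacuous inequality). I should also keep the convention that $|E(T,T)|$ counts each internal edge twice consistent with the identity $|E(T,T)| = \sum_{v\in T}\deg_T(v)$, and note that the bound is only meaningful when $\delta > \lambda/d$ and $S \neq \emptyset$, which is exactly the regime in which the derived inequality $|T| \geq n(\delta - \lambda/d)$ is non-trivial.
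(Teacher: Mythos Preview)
Your proof is correct and follows essentially the same argument as the paper's: define the set of vertices touched by $S$ (your $T$, the paper's $U$), use the hypothesis to lower-bound $|E(T,T)|$, combine with the upper-bound side of the expander mixing lemma to get $|T|\geq(\delta-\lambda/d)n$, and then double-count incidences to conclude $|S|\geq \tfrac{1}{2}\delta d\,|T|$. The only differences are cosmetic (naming, and your explicit discussion of the doubling convention and of the direction in which Theorem~\ref{thm:eml} is applied).
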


\begin{proof}
Let $U \subseteq V$ be the subset of vertices in $G$ which are incident to some edge in $S$. Then by assumption, each vertex of $U$ is incident to at least $\delta d$ other vertices of $U$, and consequently we have that  
$|E(U,U)| \geq \delta d |U|$ (recall that by definition, each edge with both endpoints in $U$ is counted twice in $|E(U,U)|$). 
On the other hand, by the expander mixing lemma (Theorem \ref{thm:eml}), we have that 
$|E(U,U)| \leq \frac {d} {n} |U|^2 + \lambda |U|$. Combining these two inequalities gives that $|U| \geq (\delta - \frac \lambda d) n$. But by assumption that each vertex in $U$ is incident to at least $\delta d$ edges in $S$, this implies in turn that 
$$|S| \geq \frac 1 2  \delta d |U| \geq \delta  \left(\delta - \frac \lambda d\right) \cdot \frac {dn} {2} =\delta \left(\delta - \frac \lambda d\right) |E| .$$
\end{proof}

By the definition of the edge-vertex incidence graph, the above lemma can be equivalently stated as follows.

\begin{lemma}[Equivalent statement of Lemma \ref{lem:spectral_one_UN}]\label{lem:spectral_one_UN'}
Let $G=(L \cup R,E)$ be the edge-vertex incidence graph of a $d$-regular graph that is a $\lambda$-spectral expander, and let $S \subseteq L$ be a subset of left vertices in $G$. Suppose that 
$|\Gamma(v) \cap S| \geq \delta  d$ for any $v \in \Gamma(S)$. Then $|S| \geq \delta (\delta -  \frac \lambda d) |L|$.
\end{lemma}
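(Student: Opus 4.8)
The plan is to observe that this lemma is simply a restatement of Lemma \ref{lem:spectral_one_UN} under the dictionary provided by the edge-vertex incidence graph, so that no new argument is needed beyond carefully unfolding definitions. Write $H = (V, E_H)$ for the underlying $d$-regular $\lambda$-spectral expander, so that by definition of the edge-vertex incidence graph we have $L = E_H$, $R = V$, and a left vertex $e \in L$ is adjacent in $G$ to a right vertex $v \in R$ precisely when $v$ is an endpoint of the edge $e$ in $H$.

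First I would translate the hypothesis. A subset $S \subseteq L$ of left vertices of $G$ is exactly a subset $S \subseteq E_H$ of edges of $H$. For a right vertex $v \in R = V$, the neighborhood $\Gamma(v)$ is the set of edges of $H$ incident to $v$, so $\Gamma(v) \cap S$ is the set of edges of $S$ incident to $v$, and $\Gamma(S)$ is precisely the set of vertices of $H$ incident to some edge of $S$ (this is the set $U$ appearing in the proof of Lemma \ref{lem:spectral_one_UN}). Hence the assumption ``$|\Gamma(v) \cap S| \geq \delta d$ for every $v \in \Gamma(S)$'' says exactly that every vertex of $H$ incident to some edge of $S$ is incident to at least $\delta d$ edges of $S$, which is verbatim the hypothesis of Lemma \ref{lem:spectral_one_UN}.

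It then remains to match the conclusions. Applying Lemma \ref{lem:spectral_one_UN} to the edge set $S$ of $H$ gives $|S| \geq \delta(\delta - \frac{\lambda}{d})|E_H|$, and since $L = E_H$ we have $|E_H| = |L|$, yielding $|S| \geq \delta(\delta - \frac{\lambda}{d})|L|$ as desired. I do not anticipate any real obstacle here, as the content is identical to the already-proven Lemma \ref{lem:spectral_one_UN}; the only point requiring care is the bookkeeping of the correspondence, in particular confirming that $\Gamma(v)$ for a right vertex counts edge-incidences in $H$ and that the left-vertex count $|L|$ equals the edge count $|E_H| = \frac{dn}{2}$ playing the role of $|E|$ in Lemma \ref{lem:spectral_one_UN}.
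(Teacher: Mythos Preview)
Your proposal is correct and matches the paper's approach exactly: the paper also treats this lemma as a direct restatement of Lemma~\ref{lem:spectral_one_UN} via the edge-vertex incidence correspondence, offering no separate proof. Your careful unfolding of the dictionary (identifying $S \subseteq L$ with an edge set of $H$, $\Gamma(S)$ with the set $U$ of incident vertices, and $|L|$ with $|E_H|$) is precisely the bookkeeping the paper leaves implicit.
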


We now proceed to the proof of Lemma \ref{lem:spectral_one_UN_comp}, based on the above lemma.

\begin{proof}[Proof of Lemma \ref{lem:spectral_one_UN_comp}]
Let $S  \subseteq L'' = L$ be a subset of left vertices in $G \circ G'$ of size $|S| < \delta'  (\delta' -  \frac \lambda d) |L''| = \delta'  (\delta' -  \frac \lambda d) |L|$. Then by Lemma \ref{lem:spectral_one_UN'}, there must exist a vertex $v \in \Gamma_G(S)$ so that $|\Gamma_G(v) \cap S| < \delta'  d$. 

Let $I \subseteq [d]$ be the subset which contains all indices of edges in $\Gamma_G(v)$ whose endpoints are contained in $S$. Then  $I$ is a non-empty subset satisfying that  $|I| < \delta'  d = \delta'  |L'|$, and since $G'$ is a 
$\delta'$-UN expander, there exists a vertex $v' \in R'$ which is a unique neighbor of $I$ in $G'$.

We claim that $(v,v')$ is a unique neighbor of $S$ in $G \circ G'$. To see this, let $i \in I$ be the unique neighbor of $v'$ in $I$ in the graph $G'$, and let $s \in S$ be so that $(s,v)$ is the $i$-th edge in $\Gamma_G(v)$. Then by the definition of the routed product, we have that $(s,(v,v')) \in E''$. Now suppose that $t \in S \setminus \{s\}$, and that  $(t,v)$ is the $j$-th edge in $\Gamma_G(v)$. Then we have that $j \in I \setminus \{i\}$, and so
 by assumption that $i \in I$ is the unique neighbor of $v'$ in $I$ in the graph $G'$, we have that $(j,v') \notin E'$. Consequently, by the definition of the routed product, $(t,(v,v')) \notin E''$. So $(v,v')$ is a unique neighbor of $S$ in $G \circ G'$.
\end{proof}

\subsection{Constant fraction of unique neighbors}\label{subsec:spectral_more_UN}

The following lemma says that the routed product of an outer edge-vertex incidence graph of a spectral expander with an inner unique neighbor expander yields a unique neighbor expander, where now both unique neighbor expanders guarantee the existence of a \emph{constant fraction} of unique neighbors for any not too large left subset.

\begin{lemma}\label{lem:spectral_more_UN_comp}
Suppose that $G= (L \cup R, E)$ is the edge-vertex incidence graph of a $d$-regular graph that is a
$\lambda$-spectral expander, and that $G' = (L' \cup R', E')$ is a $d'$-left regular bipartite graph that is a
$(\delta', \alpha')$-UN expander with $L'=[d]$. Then for any $\gamma \in (0,1)$, $G\circ G'=(L'' \cup R'', E'')$ 
is a $(\gamma  \delta' (\gamma \delta' - \frac \lambda d),  \frac { (1-\gamma)  \alpha'} {d})$-UN expander.
\end{lemma}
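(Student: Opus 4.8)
The plan is to run the same argument as in the proof of Lemma~\ref{lem:spectral_one_UN_comp}, but instead of extracting a single unique neighbor I would harvest unique neighbors from \emph{every} right vertex of the outer graph that sees only a few elements of $S$. Fix $S\subseteq L''=L$ with $|S|<\gamma\delta'(\gamma\delta'-\tfrac\lambda d)|L|$, and call a vertex $v\in\Gamma_G(S)$ \emph{light} if $0<|\Gamma_G(v)\cap S|<\delta' d$ and \emph{heavy} if $|\Gamma_G(v)\cap S|\ge\delta' d$. For a light $v$, the index set $I_v\subseteq[d]=L'$ of edges of $v$ landing in $S$ is nonempty of size $<\delta'|L'|$, so the $(\delta',\alpha')$-UN property of $G'$ yields at least $\alpha' d'|I_v|$ unique neighbors $v'\in R'$ of $I_v$. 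Exactly the computation in the proof of Lemma~\ref{lem:spectral_one_UN_comp} shows that each such $v'$ makes $(v,v')$ a unique neighbor of $S$ in $G\circ G'$, and since the first coordinate already distinguishes pairs coming from different $v$, these are all distinct. Summing over light $v$ gives
\[
|\Gamma^{\uni}_{G\circ G'}(S)|\ \ge\ \alpha' d'\sum_{v\ \mathrm{light}}|\Gamma_G(v)\cap S|.
\]

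So the task reduces to showing that light vertices capture a $(1-\gamma)$ fraction of the edges of $S$; this is the strengthening of Lemma~\ref{lem:spectral_one_UN'} that I would isolate as a lemma. Let $B\subseteq R$ be the set of heavy vertices. Since each edge of $S$ (a left vertex of degree $2$) contributes $2$ to $\sum_{v\in R}|\Gamma_G(v)\cap S|=2|S|$, restricting to $B$ and using $|\Gamma_G(v)\cap S|\ge\delta' d$ gives $|B|\le 2|S|/(\delta' d)$. Let $S_{HH}$ be the number of edges of $S$ with \emph{both} endpoints heavy; these are original-graph edges inside $B$, so $S_{HH}\le\tfrac12|E(B,B)|$, and the Expander Mixing Lemma (Theorem~\ref{thm:eml}) bounds $|E(B,B)|\le\tfrac dn|B|^2+\lambda|B|$. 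Plugging in $|B|\le 2|S|/(\delta' d)$, using $|L|=|E|=dn/2$, and invoking $|S|<\gamma\delta'(\gamma\delta'-\tfrac\lambda d)|E|$ to replace the factor $|S|/|E|$, the two terms collapse to $S_{HH}<\big(\gamma^2+\tfrac{\lambda}{\delta' d}(1-\gamma)\big)|S|$; since the threshold is positive precisely when $\gamma\delta'>\lambda/d$, i.e.\ $\tfrac{\lambda}{\delta' d}<\gamma$, this is $<\gamma|S|$. Hence at least $|S|-S_{HH}>(1-\gamma)|S|$ edges of $S$ have a light endpoint, so $\sum_{v\ \mathrm{light}}|\Gamma_G(v)\cap S|\ge|S|-S_{HH}>(1-\gamma)|S|$.

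Combining the two displayed bounds, and recalling that the left degree of $G\circ G'$ is $d''=2d'$ (the outer edge-vertex incidence graph is left-regular of degree $2$), I obtain $|\Gamma^{\uni}_{G\circ G'}(S)|>\alpha' d'(1-\gamma)|S|=\tfrac{(1-\gamma)\alpha'}{2}\,d''|S|$, which is at least $\tfrac{(1-\gamma)\alpha'}{d}\,d''|S|$ for $d\ge 2$; this is exactly the claimed $\big(\gamma\delta'(\gamma\delta'-\tfrac\lambda d),\ \tfrac{(1-\gamma)\alpha'}{d}\big)$-UN expansion (in fact a slightly stronger expansion factor). I expect the one delicate point to be the middle paragraph, namely the calibration of the heavy-heavy edge count $S_{HH}$ against the size threshold: the inequalities there are engineered so that the two EML error terms exactly consume the slack in $|S|<\gamma\delta'(\gamma\delta'-\tfrac\lambda d)|E|$, and it is the two factors of $\gamma$ in the size bound, interacting with the $1/\delta'^2$ coming from the heavy-vertex budget, that upgrade the one-unique-neighbor guarantee of Lemma~\ref{lem:spectral_one_UN_comp} into the constant-fraction guarantee $S_{HH}<\gamma|S|$. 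Everything else is bookkeeping identical to the single-neighbor case.
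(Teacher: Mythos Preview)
Your proof is correct, and in fact yields a strictly better expansion constant than the paper's argument. But the route is genuinely different, so it is worth spelling out the contrast.

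The paper isolates a quantitative version of Lemma~\ref{lem:spectral_one_UN'} (their Lemma~\ref{lem:spectral_more_UN'}): applying the expander mixing lemma to the \emph{entire} touched set $U=\Gamma_G(S)$, they conclude that if $|S|$ is below the threshold then at least a $(1-\gamma)$-fraction of the \emph{vertices} of $\Gamma_G(S)$ are light. They then extract only $\alpha'd'$ unique neighbors per light vertex (using merely $|I_v|\ge 1$), multiply by $(1-\gamma)|\Gamma_G(S)|$, and finally lower-bound $|\Gamma_G(S)|\ge 2|S|/d$. The last step is where the factor $1/d$ in their expansion constant $\tfrac{(1-\gamma)\alpha'}{d}$ comes from.

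Your argument instead applies the expander mixing lemma to the \emph{heavy} set $B$ alone, bounding the number $S_{HH}$ of heavy--heavy edges directly. You then harvest the full $\alpha'd'|I_v|$ unique neighbors per light vertex and sum the weighted quantity $\sum_{v\ \mathrm{light}}|I_v|$, which you lower-bound by $(1-\gamma)|S|$ without ever passing through $|\Gamma_G(S)|$. This avoids the $1/d$ loss and gives expansion $\tfrac{(1-\gamma)\alpha'}{2}$, i.e.\ a factor $d/2$ stronger than the stated lemma. Both approaches are short and rest on a single invocation of Theorem~\ref{thm:eml}; the paper's version has the virtue of reusing the auxiliary lemma verbatim in the combinatorial-expander section, while yours is sharper here.
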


To prove the above lemma we shall need the following quantitative version of Lemma \ref{lem:spectral_one_UN}.

\begin{lemma}\label{lem:spectral_more_UN}
Let $G=(V,E)$ be a $d$-regular graph that is a $\lambda$-spectral expander, let $S \subseteq E$ be a subset of edges in $G$, and let
$U \subseteq V$ be the subset of vertices in $G$ which are incident to some edge in $S$. Suppose that at least a $\gamma$-fraction of the vertices of $U$ are incident to at least $\delta  d$ edges in $S$. Then $|S| \geq \gamma \delta (\gamma\delta -  \frac \lambda d) |E|$.
\end{lemma}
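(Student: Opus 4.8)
The plan is to follow the proof of Lemma \ref{lem:spectral_one_UN} almost verbatim, but to apply the expander mixing lemma to an \emph{asymmetric} pair of vertex sets in order to accommodate the fact that now only a $\gamma$-fraction of the vertices of $U$ are guaranteed to be heavy (i.e., incident to at least $\delta d$ edges of $S$). Concretely, I would let $U' \subseteq U$ be the subset of heavy vertices, so that by hypothesis $|U'| \geq \gamma |U|$. Since every edge of $S$ incident to a vertex of $U'$ has its other endpoint in $U$ (by the definition of $U$), summing the $S$-degrees over $U'$ and using the counting convention for $E(U',U)$ gives the lower bound $|E(U',U)| \geq \delta d |U'| \geq \gamma \delta d |U|$.

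Next I would apply the expander mixing lemma (Theorem \ref{thm:eml}) to the pair $(U',U)$ to get $|E(U',U)| \leq \frac{d}{n}|U'||U| + \lambda \sqrt{|U'||U|}$. The key maneuver is then to invoke the trivial bound $|U'| \leq |U|$ to collapse this two-variable estimate into $|E(U',U)| \leq \frac{d}{n}|U|^2 + \lambda |U|$. Combining with the lower bound above and dividing through by $|U|$ yields $\gamma \delta d \leq \frac{d}{n}|U| + \lambda$, and hence $|U| \geq \left(\gamma \delta - \frac{\lambda}{d}\right) n$.

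Finally, I would lower bound $|S|$ by counting only the $S$-edges incident to $U'$: since each vertex of $U'$ carries at least $\delta d$ such edges, and each edge is counted at most twice, we have $|S| \geq \frac{1}{2}\delta d |U'| \geq \frac{1}{2}\gamma \delta d |U|$. Substituting the bound on $|U|$ and using $|E| = \frac{dn}{2}$ gives $|S| \geq \frac{1}{2}\gamma \delta d \left(\gamma \delta - \frac{\lambda}{d}\right) n = \gamma \delta \left(\gamma \delta - \frac{\lambda}{d}\right) |E|$, which is exactly the claimed inequality.

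I expect the main obstacle to be the correct asymmetric application of the mixing lemma together with the double-counting bookkeeping. One must apply it to $(U',U)$ rather than to $(U,U)$, verify that the heavy vertices alone supply enough edges into $U$, and then use $|U'| \leq |U|$ to reduce the resulting two-variable inequality to a clean scalar bound on $|U|$; a careless symmetric treatment would either lose the $\gamma^2$ factor or produce a weaker estimate. As a sanity check, setting $\gamma = 1$ recovers Lemma \ref{lem:spectral_one_UN} exactly, so the quantitative refinement degrades gracefully to the original statement.
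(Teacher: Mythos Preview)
Your proof is correct and follows essentially the same approach as the paper. The only difference is that the paper applies the expander mixing lemma symmetrically to the pair $(U,U)$---obtaining the lower bound $|E(U,U)| \geq \gamma\delta d\,|U|$ directly from the $\gamma$-fraction of heavy vertices---rather than to $(U',U)$; so the symmetric treatment you were wary of in fact goes through without any loss.
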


\begin{proof}
By assumption,  at least a $\gamma$-fraction of the vertices of $U$ are incident to at least $\delta  d$ other vertices in $U$, and consequently we have that
$|E(U,U)| \geq \gamma \delta d |U|$. 
On the other hand, by the expander mixing lemma (Theorem \ref{thm:eml}), we have that 
$|E(U,U)| \leq \frac {d} {n} |U|^2 + \lambda |U|$. Combining these two inequalities gives that $|U| \geq (\gamma \delta - \frac \lambda d) n$. But by assumption that at least a $\gamma$-fraction of the vertices of $U$ are incident to at least $\delta  d$ edges in $S$, this implies in turn that 
$$|S| \geq \frac 1 2  \gamma \delta d |U| \geq \gamma \delta  \left(\gamma \delta - \frac \lambda d\right)  \frac {dn} {2} =\gamma \delta \left(\gamma \delta - \frac \lambda d\right) |E| .$$
\end{proof}

As before, by the definition of the edge-vertex incidence graph, the above lemma can be equivalently stated as follows.

\begin{lemma}[Equivalent statement of Lemma \ref{lem:spectral_more_UN}]\label{lem:spectral_more_UN'}
Let $G=(L \cup R,E)$ be the edge-vertex incidence graph of a $d$-regular graph that is a $\lambda$-spectral expander, and let $S \subseteq L$ be a subset of left vertices in $G$. Suppose that $|\Gamma(v) \cap S| \geq \delta  d$
for at least a $\gamma$-fraction of the vertices in $\Gamma(S)$.
 Then $|S| \geq \gamma \delta  (\gamma \delta -  \frac \lambda d) |L|$.
\end{lemma}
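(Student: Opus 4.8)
The plan is to simply reinterpret the statement through the definition of the edge-vertex incidence graph and then invoke Lemma \ref{lem:spectral_more_UN}, of which this is declared an equivalent reformulation. Let $H = (V, F)$ denote the underlying $d$-regular $\lambda$-spectral expander, so that the incidence graph $G = (L \cup R, E)$ has $L = F$ and $R = V$. Then a subset $S \subseteq L$ of left vertices of $G$ is literally a subset $S \subseteq F$ of edges of $H$, and the two lemmas concern the very same object.

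First I would record the dictionary between the neighborhood notions in $G$ and the incidence structure of $H$. For a left vertex $e \in L = F$ (an edge of $H$), its neighborhood $\Gamma(e)$ in $G$ is the set of the two endpoints of $e$; hence $\Gamma(S)$ is exactly the set $U \subseteq V$ of vertices of $H$ incident to some edge in $S$. Dually, for a right vertex $v \in R = V$, its neighborhood $\Gamma(v)$ in $G$ is the set of edges of $H$ incident to $v$, so that $|\Gamma(v) \cap S|$ is precisely the number of edges of $S$ incident to $v$ in $H$. Under this dictionary, the hypothesis that $|\Gamma(v) \cap S| \geq \delta d$ holds for at least a $\gamma$-fraction of the vertices $v \in \Gamma(S)$ is word-for-word the hypothesis of Lemma \ref{lem:spectral_more_UN} that at least a $\gamma$-fraction of the vertices of $U$ are incident to at least $\delta d$ edges in $S$.

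Finally, since $L = F$ gives $|L| = |F|$, Lemma \ref{lem:spectral_more_UN} yields $|S| \geq \gamma \delta (\gamma \delta - \frac{\lambda}{d}) |F| = \gamma \delta (\gamma \delta - \frac{\lambda}{d}) |L|$, which is exactly the claimed bound. There is essentially no obstacle here: all the mathematical content (the expander mixing lemma estimate on $|E(U,U)|$ and the factor-of-two bookkeeping from double-counting edges with both endpoints in $U$) is already contained in the proof of Lemma \ref{lem:spectral_more_UN}, and the only thing requiring care is orienting the left/right correspondence of the incidence graph correctly, so that edges of $H$ become left vertices and vertices of $H$ become right vertices.
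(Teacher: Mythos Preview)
Your proposal is correct and is exactly the approach the paper intends: the paper does not give a separate proof but simply declares this lemma an equivalent restatement of Lemma~\ref{lem:spectral_more_UN} via the definition of the edge-vertex incidence graph, and your dictionary between $L=F$, $R=V$, $\Gamma(S)=U$, and $|\Gamma(v)\cap S|$ as the number of edges of $S$ incident to $v$ is precisely that equivalence spelled out.
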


We now proceed to the proof of Lemma \ref{lem:spectral_more_UN_comp}, based on the above lemma.

\begin{proof}[Proof of Lemma \ref{lem:spectral_more_UN_comp}]
Let $S  \subseteq L'' = L$ be a subset of left vertices in $G \circ G'$ of size $|S| < \gamma\delta'  (\gamma \delta' -  \frac \lambda d) |L''| = \gamma \delta' (\gamma \delta' -  \frac \lambda d) |L|$. Then by Lemma \ref{lem:spectral_more_UN'}, 
 $|\Gamma_G(v) \cap S| < \delta'  d$ for at least a $(1-\gamma)$-fraction of the vertices in $\Gamma_G(S)$. 

Let $v \in \Gamma_G(S)$ be a vertex which satisfies that  $|\Gamma_G(v) \cap S| < \delta'  d$.
As in the proof of Lemma \ref{lem:spectral_one_UN_comp}, let $I \subseteq [d]$ be the subset which contains all indices of edges in $\Gamma_G(v)$ whose endpoints are contained in $S$. Then  $I$ is a non-empty subset satisfying that  $|I| < \delta'  d = \delta'  |L'|$, and since $G'$ is a $( \delta', \alpha')$-UN expander, at least $\alpha' d'$ vertices $v' \in R'$ are unique neighbors of $I$ in $G'$. As in the proof of Lemma \ref{lem:spectral_one_UN_comp}, by the definition of the routed product, this implies in turn that 
$(v,v')$ is a unique neighbor of $S$ in $G \circ G'$ for at least $\alpha'  d'$ vertices $v' \in R'$. 

Overall,  we get that the number of unique neighbors of $S$ in $G \circ G'$ is at least
$$|\Gamma^{\uni}_{G \circ G'}(S)| \geq
\alpha' d' (1-\gamma) |\Gamma_G(S)| \geq \alpha' d' (1-\gamma) \frac{2 |S|} {d}  = \frac{ (1-\gamma) \alpha'} {d} \cdot  (2 d')|S|,$$
where the bound $|\Gamma_G(S)| \geq \frac{2 |S|} {d} $ follows since $G$ is $(2,d)$-regular. The conclusion follows by recalling that by the definition of the routed product, $G \circ G'$ is $(2d')$-left regular.
\end{proof}

\subsection{Instantiation}\label{subsec:spectral_instant}

Applying Lemma \ref{lem:spectral_more_UN_comp} with the outer graph being the explicit spectral expander given by Theorem \ref{thm:spectral_explicit}, and the inner graph being the non-explicit unique neighbor expander given by Lemma  \ref{lem:UN_random}, gives Theorem \ref{thm:spectral_UN}, restated below, which gives an explicit construction of a unique neighbor expander.

\begin{theorem}\label{thm:spectral_UN}
For any  constant $ \beta \in (0,1]$, there exist $\delta, \alpha>0$ and
a non-negative integer $c$,
 so that there exists an explicit construction of an infinite family $\{H_n\}_n$ of graphs, where $H_n$ is a 
 $c$-left regular bipartite graph with at least $n$ left vertices
and $\beta n$ right vertices that is a $(\delta,  \alpha)$-unique neighbor expander.
\end{theorem}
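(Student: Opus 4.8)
The plan is to instantiate the composition Lemma \ref{lem:spectral_more_UN_comp} with the explicit outer graph of Theorem \ref{thm:spectral_explicit} and the constant-size inner graph of Lemma \ref{lem:UN_random}, combined via the routed product. Concretely, for a target imbalance $\beta$ I would take the outer graph $G$ to be the edge-vertex incidence graph of a $d$-regular $\lambda$-spectral expander; such a $G$ is $(2,d)$-regular, so its right-degree is $d$ and the routed product requires an inner graph $G'$ with left set $L'=[d]$. Feeding these into Lemma \ref{lem:spectral_more_UN_comp} with a fixed $\gamma\in(0,1)$ yields, for $H = G\circ G'$, the parameters $\delta = \gamma\delta'(\gamma\delta' - \lambda/d)$ and $\alpha = (1-\gamma)\alpha'/d$, where $(\delta',\alpha')$ are the UN-expansion parameters of $G'$. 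The imbalance is governed by the routed-product formula $\tilde\beta = d_1\beta' = 2\beta'$, so to land exactly at imbalance $\beta$ I would take the inner graph to be $\beta'$-unbalanced with $\beta' = \beta/2$. The left-regularity of $H$ is $2d'$, which becomes the constant $c$.

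The delicate point is the order in which the parameters are fixed, so as to avoid a circular dependency between $d$ (the degree of the outer expander, which must equal the number of left vertices of $G'$) and $\delta'$ (the inner UN-expansion, which must exceed $\lambda/d$ after dividing by $\gamma$). The observation that breaks the circularity is that Lemma \ref{lem:UN_random} produces $\delta', d', n_0'$ depending only on $\epsilon$ and $\beta'$, and \emph{uniformly} over all left-vertex counts $\geq n_0'$. Hence I would first fix $\gamma = 1/2$ and any convenient constant $\epsilon\in(0,1)$, set $\beta' = \beta/2$, and apply Lemma \ref{lem:UN_random} to obtain $\delta'>0$, a left-degree $d'$, and a threshold $n_0'$, all independent of $d$. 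Only then would I choose the spectral parameter: invoking Theorem \ref{thm:spectral_explicit} with its eigenvalue ratio $\lambda/d$ chosen below $\gamma\delta'$ and with $d_0 = n_0'$ produces a fixed degree $d \geq n_0'$ together with an explicit infinite family of $d$-regular $\lambda$-spectral expanders. Because $d\geq n_0'$, Lemma \ref{lem:UN_random} then supplies an inner graph $G'$ with exactly $d$ left vertices, i.e. with $L'=[d]$, as the routed product demands.

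With all parameters pinned down, I would verify the three conclusions mechanically. For positivity, $\delta = \gamma\delta'(\gamma\delta' - \lambda/d) > 0$ since $\lambda/d < \gamma\delta'$, and $\alpha = (1-\gamma)(1-\epsilon)/d > 0$ since $d$ is now a fixed constant. For the size and imbalance, the left set of $H$ equals the edge set of the outer $d$-regular graph $G_N$ on $N$ vertices, so $H$ has $dN/2$ left vertices and, by the routed-product formula, exactly $2\beta' = \beta$ times as many right vertices; letting $N$ range over the infinite explicit family of Theorem \ref{thm:spectral_explicit} gives graphs of unboundedly many sizes, from which I extract the subfamily $\{H_n\}$ whose $n$-th member has at least $n$ left vertices. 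For explicitness, the outer family is explicit and the inner graph $G'$ is of \emph{constant} size (its left set has the fixed cardinality $d$), so it can be located by a brute-force search over constant-size graphs; hence $H = G\circ G'$ is explicit.

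I expect the only genuine obstacle to be precisely this bookkeeping of parameter dependencies, and in particular confirming that $\delta'$ is independent of $d$, so that the two constraints $\lambda/d < \gamma\delta'$ and $d\geq n_0'$ can be satisfied simultaneously. All the analytic content is carried by Lemma \ref{lem:spectral_more_UN_comp}; once the parameters are arranged correctly there is no further computation to do.
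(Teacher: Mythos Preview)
Your proposal is correct and follows essentially the same approach as the paper: fix $\gamma=1/2$ and $\beta'=\beta/2$, invoke Lemma~\ref{lem:UN_random} first to obtain $\delta',d',n_0'$ independent of $d$, then invoke Theorem~\ref{thm:spectral_explicit} with spectral ratio below $\gamma\delta'$ and $d_0=n_0'$, find the constant-size inner graph by brute force, and apply Lemma~\ref{lem:spectral_more_UN_comp}. The paper simply pins down the concrete constants (taking $\epsilon'=1/2$ and $\lambda/d=\delta'/4$, yielding $\delta=(\delta')^2/8$ and $\alpha=1/(4d)$), but the order of quantifiers and the resolution of the circular dependency are exactly as you describe.
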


\begin{proof}
Let $\delta'>0$ and $d',n'_0$ be the constant and non-negative integers guaranteed by Lemma  \ref{lem:UN_random} for $\epsilon' = \frac 1 2$ and $\beta'= \frac \beta 2$. Let $d\geq d_0$ be the non-negative integer guaranteed by Theorem \ref{thm:spectral_explicit} for $\frac { \delta'}4$ and $d_0=n'_0$. Let $G'_d$  be the 
$d'$-left regular $\frac \beta 2$-unbalanced bipartite graph  with $d$ left vertices that is a $(\delta',\frac 1 2)$-UN expander, guaranteed by  Lemma  \ref{lem:UN_random}. Note that  $G'_d$ can be found in constant time via brute force search. Let $\{G_n\}_n$ be the explicit infinite family of graphs, where
 each $G_n$ is a $d$-regular graph on $n$ vertices that is a $(\frac { \delta'} 4 \cdot  d)$-spectral expander, guaranteed by Theorem  \ref{thm:spectral_explicit}.

Let $H_n := I(G_n) \circ G'_d$, where $I(G_n)$ is the edge-vertex incidence graph of $G_n$.
Then by the definition of the routed product, $H_n$ is a 
$(2d')$-left regular $\beta$-unbalanced bipartite graph with $\frac{d n} {2}$ left vertices.
Moreover, applying Lemma \ref{lem:spectral_more_UN_comp} with $\gamma = \frac 1 2$ gives that $H_n$ is a 
$(\frac{(\delta')^2} {8}, \frac 1 {4d})$-UN expander. So the theorem holds with $c = 2d'$, $\delta = \frac{(\delta')^2} {8}$, and $\alpha = \frac 1 {4d}$, which are all constant depending only on $\beta$.
\end{proof}

\section{Unique Neighbor Expanders from Combinatorial Expanders}\label{sec:comb}

In this section we show that composing an outer explicit
 combinatorial expander with an inner non-explicit unique neighbor expander via the routed product gives 
  an explicit unique neighbor expander with \emph{roughly the same expansion} as the outer combinatorial expander.

As in the previous section, we first show in Section \ref{subsec:comb_one_UN}, as a warmup, how to obtain explicit unique neighbor expanders that only guarantee the existence of a single unique neighbor, using the analysis of 
 error-correcting codes that is implicit in \cite{SS96}. Then in Section \ref{subsec:comb_more_UN}, we show how to extend the analysis to obtain a constant fraction of unique neighbors.  Finally, in Section \ref{subsec:comb_instant}, we instantiate our transformation with the non-explicit unique neighbor expanders given by Lemma \ref{lem:UN_random} to obtain the final transformation.

\subsection{One unique neighbor}\label{subsec:comb_one_UN}

The following lemma says that the routed product of an outer combinatorial expander with an inner unique neighbor expander yields a unique neighbor expander, where both unique neighbor expanders guarantee the existence of only a \emph{single} unique neighbor for any not too large left subset. 

\begin{lemma}\label{lem:comb_one_UN_comp}
Suppose that $G= (L \cup R, E)$ is a $(d_1,d_2)$-regular bipartite graph that is a
$(\delta,\alpha)$-expander for $d_2 > \frac 1 \alpha$, and that $G' = (L' \cup R', E')$ is a $d'$-left regular bipartite graph that is a
$\frac{1} {\alpha d_2}$-UN expander with $L'=[d_2]$. Then $G\circ G'=(L'' \cup R'', E'')$ is a 
$\delta$-UN expander.
\end{lemma}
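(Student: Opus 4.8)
The plan is to mirror the proof of Lemma~\ref{lem:spectral_one_UN_comp}, replacing the expander-mixing-lemma input by a direct double-counting bound coming from combinatorial expansion. The three moving parts are: (i) find a single right vertex $v$ of the outer graph $G$ that sees few vertices of $S$; (ii) feed the corresponding index set into the inner UN expander $G'$ to obtain a unique neighbor $v'$; and (iii) verify that $(v,v')$ is genuinely a unique neighbor of $S$ in $G \circ G'$.

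First I would prove the combinatorial analogue of Lemma~\ref{lem:spectral_one_UN'}: if $S \subseteq L$ is non-empty with $|S| < \delta|L|$, then some $v \in \Gamma_G(S)$ satisfies $|\Gamma_G(v) \cap S| < \frac{1}{\alpha}$. This is pure double counting. By $d_1$-left-regularity the number of edges leaving $S$ is exactly $d_1|S|$, and every such edge lands in $\Gamma_G(S)$, so $\sum_{v \in \Gamma_G(S)} |\Gamma_G(v) \cap S| = d_1|S|$. Since $G$ is a $(\delta,\alpha)$-expander and $|S| < \delta|L|$, we have $|\Gamma_G(S)| \geq \alpha d_1 |S|$, whence the average of $|\Gamma_G(v)\cap S|$ over $v \in \Gamma_G(S)$ is at most $\frac{d_1|S|}{\alpha d_1|S|} = \frac{1}{\alpha}$; in particular the minimum is at most $\frac{1}{\alpha}$, which I will sharpen to a strict inequality below.

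With such a $v$ in hand, let $I \subseteq [d_2]$ be the set of indices $i$ for which the $i$-th edge at $v$ has its left endpoint in $S$. Then $I$ is non-empty (as $v \in \Gamma_G(S)$) and $|I| = |\Gamma_G(v)\cap S| < \frac{1}{\alpha} = \frac{1}{\alpha d_2}\cdot d_2 = \frac{1}{\alpha d_2}|L'|$, so the inner $\frac{1}{\alpha d_2}$-UN expander $G'$ supplies a vertex $v' \in R'$ adjacent to exactly one index $i^\ast \in I$. I would then repeat verbatim the final paragraph of the proof of Lemma~\ref{lem:spectral_one_UN_comp}: letting $s \in S$ be the left endpoint of the $i^\ast$-th edge at $v$, the routed-product rule gives $(s,(v,v')) \in E''$, while any other $t \in S$ adjacent to $(v,v')$ must route through $v$ via some index $j \in I \setminus \{i^\ast\}$ with $(j,v') \in E'$, contradicting uniqueness of $i^\ast$. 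Hence $(v,v')$ is a unique neighbor of $S$, so $G \circ G'$ is a $\delta$-UN expander. The hypothesis $d_2 > \frac{1}{\alpha}$ enters only to guarantee $\frac{1}{\alpha d_2} < 1$, so that $\frac{1}{\alpha d_2}$ is an admissible UN-expander parameter and the threshold $\frac{1}{\alpha}$ lies strictly below $|L'| = d_2$.

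The one delicate point — and the step I expect to need the most care — is upgrading the averaging bound from $\min_v |\Gamma_G(v)\cap S| \leq \frac{1}{\alpha}$ to the \emph{strict} inequality $< \frac{1}{\alpha}$ that the inner UN property demands. When $\frac{1}{\alpha} \notin \Z$ this is automatic from integrality, since an integer that is $\leq \frac{1}{\alpha}$ is $\leq \lfloor \frac{1}{\alpha}\rfloor < \frac{1}{\alpha}$. The only way strictness can fail is the degenerate equality case in which $\frac{1}{\alpha} \in \Z$, the expansion $|\Gamma_G(S)| = \alpha d_1|S|$ is exactly tight, and every vertex of $\Gamma_G(S)$ has exactly $\frac{1}{\alpha}$ neighbors in $S$. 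I would dispose of this configuration by appealing to the strict slack that the concrete outer expanders enjoy (those arising from spectral expanders via the mixing lemma satisfy $|\Gamma_G(S)| > \alpha d_1|S|$), while flagging that the clean statement tacitly presumes one is not sitting exactly at this integer boundary.
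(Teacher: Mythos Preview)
Your approach is essentially identical to the paper's: it isolates the same double-counting sublemma (the paper's Lemma~\ref{lem:comb_one_UN}, proved by exactly your edge-count $\sum_{v\in\Gamma(S)}|\Gamma(v)\cap S|=d_1|S|$ versus $|\Gamma(S)|\ge \alpha d_1|S|$) and then feeds the resulting low-degree right vertex into the inner UN expander via the routed product, just as you describe. On the strict-versus-nonstrict issue you flag, the paper handles it with the same looseness---it obtains $|\Gamma_G(v)\cap S|\le \tfrac1\alpha$ from the contrapositive and then writes $|I|<\tfrac1\alpha$ without further comment---so your caution is warranted but is not a discrepancy between your argument and the paper's.
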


The following simple lemma is implicit in the analysis of error-correcting codes of \cite{SS96}. For completeness, and since in the next section we shall need a slight strengthening of this lemma, we provide a full proof of this lemma below. 

\begin{lemma}\label{lem:comb_one_UN}
Let $G=(L \cup R,E)$ be a $(d_1,d_2)$-regular bipartite graph that is a
$(\delta,\alpha)$-expander, and let $S \subseteq L$ be a subset of left vertices in $G$. Suppose that $|\Gamma(v) \cap S| > \frac 1 \alpha$ for any $v \in \Gamma(S)$. Then $|S| \geq \delta |L|$.
\end{lemma}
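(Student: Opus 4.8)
The plan is to argue by contradiction via a double counting of the edges incident to $S$. Suppose, for the sake of contradiction, that $S$ is a non-empty subset with $|S| < \delta|L|$. Since $G$ is a $(\delta,\alpha)$-expander and $S$ satisfies the size hypothesis of that definition, the combinatorial expansion property applies and yields $|\Gamma(S)| \geq \alpha d_1 |S|$. The goal is to show that this lower bound on the neighborhood, combined with the assumption that every neighbor of $S$ has many neighbors back inside $S$, forces strictly more edges than $S$ can actually support.

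First I would count the edges between $S$ and $\Gamma(S)$ from the left: since $G$ is $(d_1,d_2)$-regular, every vertex of $S$ has degree exactly $d_1$, and by definition of $\Gamma(S)$ all of these edges land in $\Gamma(S)$, so this edge count equals $d_1|S|$ exactly. Next I would count the same quantity from the right, where it equals $\sum_{v\in\Gamma(S)} |\Gamma(v)\cap S|$. Using the hypothesis $|\Gamma(v)\cap S| > \frac{1}{\alpha}$ for each $v\in\Gamma(S)$, together with the expansion bound $|\Gamma(S)| \geq \alpha d_1 |S|$, this sum is strictly larger than $\frac{1}{\alpha}\cdot |\Gamma(S)| \geq \frac{1}{\alpha}\cdot \alpha d_1 |S| = d_1|S|$. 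Comparing the two counts gives $d_1|S| > d_1|S|$, a contradiction, so the assumption $|S| < \delta|L|$ must fail and hence $|S| \geq \delta|L|$.

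As for the main obstacle, there is really no hard technical step here; the entire argument reduces to the edge-counting identity that the number of edges leaving $S$ can be tallied from either side. The one point requiring care is the logical setup: one must invoke the $(\delta,\alpha)$-expander guarantee only under the assumption $|S| < \delta|L|$ (so that the size hypothesis of the definition is genuinely met), and one must preserve the strictness of $|\Gamma(v)\cap S| > \frac{1}{\alpha}$, since it is exactly this strict inequality that makes the two edge counts incompatible rather than merely equal. I would also flag the implicit assumption that $S$ is non-empty, as otherwise $\Gamma(S)=\emptyset$, the hypothesis on $\Gamma(v)$ is vacuous, and the conclusion would fail.
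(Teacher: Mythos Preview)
Your proof is correct and follows essentially the same double-counting argument as the paper: both compute $|E(S,\Gamma(S))|$ from the left as $d_1|S|$ and from the right as exceeding $\frac{1}{\alpha}|\Gamma(S)|$, then combine this with the expansion guarantee. The only cosmetic difference is that the paper phrases it directly (deducing $|\Gamma(S)| < \alpha d_1 |S|$ and then invoking the contrapositive of the expander definition) rather than setting up an explicit contradiction, and your remark about the vacuous $S=\emptyset$ case is a fair caveat that the paper leaves implicit.
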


\begin{proof}
By assumption that $|\Gamma(v) \cap S| > \frac 1 \alpha$ for any $v \in \Gamma(S)$, we have that 
$|E(S, \Gamma(S))| > \frac 1 \alpha |\Gamma(S)|$. On the other hand, since $G$ is $(d_1,d_2)$-regular, we have that 
$|E(S, \Gamma(S))| = d_1|S|$. Combining these two inequalities gives that $|\Gamma(S)| < \alpha d_1 |S|$. By assumption that $G$ is a $(\delta,\alpha)$-expander, this implies in turn that $|S| \geq \delta |L|$. 
\end{proof}

We now proceed to the proof of Lemma \ref{lem:comb_one_UN_comp}, based on the above lemma.

\begin{proof}[Proof of Lemma \ref{lem:comb_one_UN_comp}]
Let $S  \subseteq L'' = L$ be a subset of left vertices in $G \circ G'$ of size $|S| < \delta |L''| =  \delta |L|$. Then by Lemma \ref{lem:comb_one_UN}, there must exist a vertex $v \in \Gamma_G(S)$ so that $|\Gamma_G(v) \cap S| \leq \frac 1 \alpha$. 

Let $I \subseteq [d_2]$ be the subset which contains all indices of edges in $\Gamma_G(v)$ whose endpoints are contained in $S$. Then  $I$ is a non-empty subset satisfying that  $|I| < \frac 1 \alpha  = \frac {1 } {\alpha d_2} \cdot |L'|$, and since $G'$ is a $(\frac 1  {\alpha d_2})$-UN expander, there exists a vertex $v' \in R'$ which is a unique neighbor of $I$ in $G'$. By the definition of the routed product, this implies in turn that $(v,v')$ is a unique neighbor of $S$ in $G \circ G'$. 
\end{proof}

\subsection{Constant fraction of unique neighbors}\label{subsec:comb_more_UN}

The following lemma says that the routed product of an outer combinatorial expander with an inner unique neighbor expander yields a unique neighbor expander, where now both unique neighbor expanders guarantee the existence of a \emph{constant fraction} of unique neighbors for any not too large left subset.

\begin{lemma}\label{lem:comb_more_UN_comp}
The following holds for any $\gamma \in (0,1)$. 
Suppose that $G= (L \cup R, E)$ is a $(d_1,d_2)$-regular bipartite graph that is a
$(\delta,\alpha)$-expander for $d_2 > \frac 1 {\gamma\alpha}$, and that $G' = (L' \cup R', E')$ is a $d'$-left regular bipartite graph that is a
$(\frac{1} {\gamma \alpha d_2}, \alpha')$-UN expander with $L'=[d_2]$. Then $G\circ G'=(L'' \cup R'', E'')$ is a 
$( \delta, (1-\gamma) \alpha'  \alpha)$-UN expander.
\end{lemma}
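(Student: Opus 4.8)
The plan is to follow the same two-step template used in the spectral case (Lemma \ref{lem:spectral_more_UN_comp}): first establish a \emph{fractional} strengthening of the combinatorial counting lemma (Lemma \ref{lem:comb_one_UN}), and then feed it into the routed-product argument together with the inner UN expander. The fractional lemma I would prove is the following analogue of Lemma \ref{lem:spectral_more_UN'}: if $G=(L\cup R,E)$ is a $(d_1,d_2)$-regular $(\delta,\alpha)$-expander and $S\subseteq L$ satisfies $|\Gamma(v)\cap S| > \frac{1}{\gamma\alpha}$ for at least a $\gamma$-fraction of the vertices $v\in\Gamma(S)$, then $|S|\geq\delta|L|$.

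To prove this fractional lemma I would restrict the edge count to the ``heavy'' neighbors. Let $T\subseteq\Gamma(S)$ be the set of $v$ with $|\Gamma(v)\cap S| > \frac{1}{\gamma\alpha}$, so $|T|\geq\gamma|\Gamma(S)|$ by hypothesis. Counting only edges into $T$ gives $|E(S,T)| = \sum_{v\in T}|\Gamma(v)\cap S| > \frac{1}{\gamma\alpha}|T| \geq \frac{1}{\alpha}|\Gamma(S)|$. On the other hand $|E(S,\Gamma(S))| = d_1|S|$ by left-regularity, and $|E(S,\Gamma(S))|\geq|E(S,T)|$, so $d_1|S| > \frac{1}{\alpha}|\Gamma(S)|$, i.e. $|\Gamma(S)| < \alpha d_1|S|$. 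Since $G$ is a $(\delta,\alpha)$-expander this forces $|S|\geq\delta|L|$, exactly as in the proof of Lemma \ref{lem:comb_one_UN}. The only new idea relative to that lemma is to throw away the light neighbors before counting, and the factor $\frac{1}{\gamma}$ in the threshold is precisely what compensates for keeping only a $\gamma$-fraction of $\Gamma(S)$.

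With the fractional lemma in hand, the main proof mirrors that of Lemma \ref{lem:spectral_one_UN_comp}. Take $S\subseteq L''=L$ with $|S|<\delta|L|$. The contrapositive of the fractional lemma says that at least a $(1-\gamma)$-fraction of the vertices $v\in\Gamma_G(S)$ are ``light'', meaning $|\Gamma_G(v)\cap S| \leq \frac{1}{\gamma\alpha} = \frac{1}{\gamma\alpha d_2}\cdot|L'|$ (the passage from $\leq$ to the strict bound needed below is handled exactly as in the proof of Lemma \ref{lem:comb_one_UN_comp}). For each such light $v$, let $I\subseteq[d_2]$ be the indices of the edges of $\Gamma_G(v)$ landing in $S$; then $I$ is non-empty with $|I| < \frac{1}{\gamma\alpha d_2}|L'|$, so the inner $(\frac{1}{\gamma\alpha d_2},\alpha')$-UN expander supplies at least $\alpha'd'$ unique neighbors $v'\in R'$ of $I$ in $G'$. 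By the routed-product argument from Lemma \ref{lem:spectral_one_UN_comp}, each resulting pair $(v,v')$ is a unique neighbor of $S$ in $G\circ G'$, and distinct light vertices $v$ give disjoint families of such pairs since they have distinct first coordinates.

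Finally I would count. The above produces at least $(1-\gamma)|\Gamma_G(S)|$ choices of $v$, each contributing at least $\alpha'd'$ unique neighbors $(v,v')$, so $|\Gamma^{\uni}_{G\circ G'}(S)| \geq (1-\gamma)\,\alpha'd'\,|\Gamma_G(S)|$. The key difference from the spectral case is that here I bound $|\Gamma_G(S)|$ using the outer expansion itself: since $|S|<\delta|L|$ and $G$ is a $(\delta,\alpha)$-expander, $|\Gamma_G(S)|\geq\alpha d_1|S|$. Substituting gives $|\Gamma^{\uni}_{G\circ G'}(S)| \geq (1-\gamma)\alpha'\alpha\,(d_1 d')\,|S|$, and since $G\circ G'$ is $(d_1 d')$-left regular this is exactly the statement that $G\circ G'$ is a $(\delta,(1-\gamma)\alpha'\alpha)$-UN expander. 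I expect the only genuinely delicate points to be bookkeeping ones: recognizing that one must invoke the outer expansion (rather than a crude degree bound) to land the clean $\alpha$ factor, the $\leq$-versus-$<$ boundary in applying the inner expander, and verifying that the pairs $(v,v')$ are counted without overlap; the latter two are resolved exactly as in the single-neighbor proofs.
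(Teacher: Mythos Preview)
Your proposal is correct and follows essentially the same approach as the paper: you state and prove the same fractional strengthening of Lemma~\ref{lem:comb_one_UN} (the paper's Lemma~\ref{lem:comb_more_UN}), take its contrapositive to get a $(1-\gamma)$-fraction of light neighbors, apply the inner UN expander at each light vertex, and then invoke the outer expansion $|\Gamma_G(S)|\geq\alpha d_1|S|$ to finish the count. The only cosmetic difference is that you make the ``heavy set $T$'' explicit in the fractional lemma, whereas the paper compresses that step into one line.
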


To prove the above lemma we shall need the following  quantitative version of Lemma \ref{lem:comb_one_UN}.

\begin{lemma}\label{lem:comb_more_UN}
Let $G=(L \cup R,E)$ be a $(d_1,d_2)$-regular bipartite graph that is 
 a $(\delta,\alpha)$-expander, and let $S \subseteq L$ be a subset of left vertices in $G$. Suppose that $|\Gamma(v) \cap S| > \frac 1 {\gamma \alpha}$ for at least a $\gamma$-fraction of the vertices  in $ \Gamma(S)$. Then $|S| \geq \delta |L|$.
\end{lemma}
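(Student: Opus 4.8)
The plan is to mirror the proof of Lemma \ref{lem:comb_one_UN}, but to carry out the edge-counting argument only over the sub-collection of \emph{heavy} right vertices, namely those $v \in \Gamma(S)$ for which $|\Gamma(v) \cap S| > \frac{1}{\gamma\alpha}$. Call this set $T$; by hypothesis $|T| \geq \gamma |\Gamma(S)|$. The point is that the threshold $\frac{1}{\gamma\alpha}$ has been chosen precisely so that, even after discarding the $(1-\gamma)$-fraction of light vertices, the heavy vertices alone already force too many edges to leave $S$, contradicting the expansion of $G$ whenever $|S| < \delta|L|$.

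Concretely, I would first double-count the edges between $S$ and $T$. On one hand, since $G$ is $d_1$-left regular, the total number of edges incident to $S$ is exactly $d_1 |S|$, so $|E(S,T)| \leq |E(S,\Gamma(S))| = d_1 |S|$. On the other hand, summing the lower bound on each heavy vertex's intersection gives $|E(S,T)| = \sum_{v \in T} |\Gamma(v)\cap S| > \frac{1}{\gamma\alpha}|T| \geq \frac{1}{\gamma\alpha}\cdot \gamma |\Gamma(S)| = \frac{1}{\alpha}|\Gamma(S)|$. Chaining the two bounds yields $\frac{1}{\alpha}|\Gamma(S)| < d_1 |S|$, i.e. $|\Gamma(S)| < \alpha d_1 |S|$. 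Finally, invoking the contrapositive of the $(\delta,\alpha)$-expander property — which guarantees $|\Gamma(S)| \geq \alpha d_1|S|$ for every non-empty $S$ with $|S| < \delta|L|$ — forces $|S| \geq \delta|L|$, as desired.

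The two appearances of $\gamma$ cancelling exactly is the crux, and is really the only thing one must get right; everything else is the same bookkeeping as in Lemma \ref{lem:comb_one_UN}. The one subtlety I would flag is the use of a \emph{strict} inequality $|E(S,T)| > \frac{1}{\gamma\alpha}|T|$: this needs $T$ to be non-empty, which holds because a non-empty $S$ has $|\Gamma(S)| \geq d_1 \geq 1$ by left-regularity, so $|T| \geq \gamma|\Gamma(S)| \geq \gamma > 0$ and hence $|T| \geq 1$ since $|T|$ is an integer. With the strict inequality in hand the conclusion is immediate, so I do not anticipate any genuine obstacle here — the lemma is a direct quantitative refinement of the one-unique-neighbor version, and the analysis is essentially identical to that implicit in \cite{SS96}.
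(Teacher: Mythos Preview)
Your proposal is correct and follows essentially the same approach as the paper's proof: both double-count edges from $S$ into $\Gamma(S)$, using the heavy vertices to get $|E(S,\Gamma(S))| > \frac{1}{\alpha}|\Gamma(S)|$ (the paper skips naming the heavy set $T$ explicitly but performs the identical $\frac{1}{\gamma\alpha}\cdot\gamma|\Gamma(S)|$ cancellation), and then contradict the $(\delta,\alpha)$-expansion. Your extra care about $T\neq\emptyset$ and the strict inequality is a nice touch that the paper leaves implicit.
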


\begin{proof}
By assumption that $|\Gamma(v) \cap S| > \frac 1 {\gamma \alpha}$ for at least a $\gamma$-fraction of the vertices  in $ \Gamma(S)$, we have that 
$|E(S, \Gamma(S))| > \frac 1 \alpha |\Gamma(S)|$. On the other hand, since $G$ is $(d_1, d_2)$-regular, we have that 
$|E(S, \Gamma(S))| = d_1|S|$. Combining these two inequalities gives that $|\Gamma(S)| < \alpha d_1 |S|$. By assumption that $G$ is a $(\delta,\alpha)$-expander, this implies in turn that $|S| \geq \delta |L|$. 
\end{proof}

We now proceed to the proof of Lemma \ref{lem:comb_more_UN_comp}, based on the above lemma.

\begin{proof}[Proof of Lemma \ref{lem:comb_more_UN_comp}]
Let $S  \subseteq L'' = L$ be a subset of left vertices in $G \circ G'$ of size $|S| < \delta |L''| = \delta |L|$. Then by Lemma \ref{lem:comb_more_UN}, 
 $|\Gamma_G(v) \cap S| < \frac 1 {\gamma \alpha}$ for at least a $(1-\gamma)$-fraction of the vertices in $\Gamma_G(S)$. 

Let $v \in \Gamma_G(S)$ be a vertex which satisfies that  $|\Gamma_G(v) \cap S| < \frac {1} {\gamma \alpha}$.
As in the proof of Lemma \ref{lem:comb_one_UN_comp}, let $I \subseteq [d_2]$ denote the subset which contains all indices of edges in $\Gamma_G(v)$ whose endpoints are contained in $S$. Then  $I$ is a non-empty subset satisfying that  $|I| < \frac 1 {\gamma \alpha} = \frac 1 {\gamma \alpha d_2} \cdot |L'|$, and since $G'$ is a $(\frac 1 {\gamma \alpha d_2}, \alpha')$-UN expander, at least $\alpha' d'$ vertices $v' \in R'$ are unique neighbors of $I$ in $G'$. By the definition of the routed product, this implies in turn that 
$(v,v')$ is a unique neighbor of $S$ in $G \circ G'$ for at least $\alpha'  d'$ vertices $v' \in R'$. 

Overall, we get that the number of unique neighbors of $S$ in $G \circ G'$ is at least
$$|\Gamma^{\uni}_{G \circ G'}(S)| \geq
\alpha' d' (1-\gamma) |\Gamma_G(S)| \geq \alpha' d' (1-\gamma) \cdot (\alpha d_1 |S|) = (1-\gamma) \alpha' \alpha \cdot  (d_1 \cdot d') |S|,$$
where the bound $|\Gamma_G(S)| \geq \alpha d_1 |S| $ follows since $G$ is a $(\delta,\alpha)$-expander. The conclusion follows by recalling that by the definition of the routed product, $G \circ G'$ is $(d_1 \cdot d')$-left regular.
\end{proof}

\subsection{Instantiation}\label{subsec:comb_instant}

Applying Lemma \ref{lem:comb_more_UN_comp} with the inner graph being the non-explicit unique neighbor expander given by Lemma  \ref{lem:UN_random}, gives Theorem \ref{thm:comb_UN}, restated below, which shows how to transform combinatorial expanders into unique neighbor expanders with roughly the same expansion.

\begin{theorem}\label{thm:comb_UN}
For any constant $\epsilon, \alpha >0$, $\beta \in (0,1]$, and non-negative integer $d$, there exist $\mu >0$ and
a non-negative integer $c$ so that the following holds.
Suppose that there exists an explicit construction of an infinite family of graphs $\{G_n\}_n$, where each $G_n$ is 
a $(d,d/\mu )$-regular bipartite graph with $n$ left vertices 
 that is a $(\delta,\alpha)$-combinatorial expander.
Then there exists an explicit construction of an infinite family of graphs $\{H_n\}_n$, where each $H_n$ is a $c$-left regular 
graph with $n$ left vertices and $\beta n$ right vertices that is a $( \delta, (1-\epsilon)\alpha)$-unique neighbor expander.
\end{theorem}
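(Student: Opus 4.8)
The plan is to apply Lemma~\ref{lem:comb_more_UN_comp} as a black box, choosing the free parameter $\gamma$ and the inner graph so that the resulting expansion is at least $(1-\epsilon)\alpha$, and so that the imbalance works out to exactly $\beta$. First I would fix $\gamma \in (0,1)$ small enough that $1-\gamma \geq 1-\epsilon$; concretely $\gamma := \epsilon$ works, since then the factor $(1-\gamma)\alpha'\alpha$ appearing in the conclusion of Lemma~\ref{lem:comb_more_UN_comp} becomes $(1-\epsilon)\alpha'\alpha$, and it remains only to force $\alpha' = 1$, or rather to absorb the loss from $\alpha'$ into the constant. Here the point is that the inner unique-neighbor expander from Lemma~\ref{lem:UN_random} can be taken with expansion $\alpha'$ arbitrarily close to $1$; applying that lemma with a second error parameter, say $\epsilon'$ chosen so that $(1-\epsilon)(1-\epsilon') \geq 1-\epsilon$ after renaming, lets me make $\alpha' = 1-\epsilon'$ as close to $1$ as needed so that $(1-\gamma)\alpha'\alpha \geq (1-\epsilon)\alpha$. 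A clean way to organize this is to pick $\gamma$ and $\epsilon'$ each roughly $\epsilon/2$ so their combined loss is at most $\epsilon$.

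Next I would set up the parameters controlling the inner graph's dimensions. Lemma~\ref{lem:comb_more_UN_comp} demands $d_2 > \frac{1}{\gamma\alpha}$ and an inner graph $G'$ that is a $\bigl(\frac{1}{\gamma\alpha d_2},\alpha'\bigr)$-UN expander with left vertex set $L' = [d_2]$. Since the theorem hypothesis gives an outer family of $(d, d/\mu)$-regular expanders, I have $d_1 = d$ and $d_2 = d/\mu$, so I must choose $\mu$ small enough (equivalently $d_2 = d/\mu$ large enough) that $d/\mu > \frac{1}{\gamma\alpha}$; this is where $\mu$ gets pinned down, as a constant depending only on $\epsilon,\alpha,d$. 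The required inner $\delta'$-value for Lemma~\ref{lem:UN_random} is $\delta' = \frac{1}{\gamma\alpha d_2} = \frac{\mu}{\gamma\alpha d}$, and the inner imbalance $\beta'$ must be set so that the routed-product imbalance $d_1 \cdot \beta' = d\beta'$ equals the target $\beta$; hence $\beta' := \beta/d$. Feeding $\epsilon'$, $\beta'$ into Lemma~\ref{lem:UN_random} yields a constant $d'$ (the inner left-degree) and a threshold $n_0$, and the inner graph on $d_2$ left vertices exists once $d_2 \geq n_0$, which I arrange by further shrinking $\mu$ if necessary, then found in constant time by brute force.

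With these choices in place I would then assemble $H_n := G_n \circ G'$, observe from the routed product that it is $(d_1 d')$-left regular with $d_1 d' = d \cdot d'$ left-degree (so $c := d d'$), that its right side has size $|R| \cdot |R'|$ giving imbalance exactly $d_1\beta' = \beta$ as computed in the preliminaries, and that it has the same number $n$ of left vertices as $G_n$. Invoking Lemma~\ref{lem:comb_more_UN_comp} then certifies that $H_n$ is a $\bigl(\delta, (1-\gamma)\alpha'\alpha\bigr)$-UN expander, and the parameter bookkeeping from the first paragraph gives $(1-\gamma)\alpha'\alpha \geq (1-\epsilon)\alpha$, as required. Explicitness follows since each $G_n$ is explicit by hypothesis and $G'$ is a single constant-size object.

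I do not expect a genuine mathematical obstacle here, since all the real work is done by Lemma~\ref{lem:comb_more_UN_comp}; the only thing requiring care is the order in which the constants are chosen, to avoid circularity. The main subtlety will be the dependency chain: $\gamma$ and $\epsilon'$ depend only on $\epsilon$; then $\mu$ must be chosen small enough to satisfy \emph{both} the degree condition $d/\mu > \frac{1}{\gamma\alpha}$ \emph{and} the existence threshold $d/\mu \geq n_0$, where $n_0$ itself comes out of Lemma~\ref{lem:UN_random} applied with $\delta' = \frac{\mu}{\gamma\alpha d}$ — so there is a mild interdependence between $\mu$ and $n_0$ through $\delta'$. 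The way to break this is to first apply Lemma~\ref{lem:UN_random} with a fixed choice that does not reference $\mu$ (e.g.\ treat $d_2$ as the free size parameter and note $\delta'$ can be taken as the value the lemma produces for the given $\epsilon',\beta'$, with $n_0$ then fixed), and only afterward set $\mu$ small enough that $d/\mu$ both exceeds $\max\{1/(\gamma\alpha), n_0\}$ and is an integer. This sequencing is the one place a careless write-up could introduce a false circular dependence, so I would state the choices in exactly that order.
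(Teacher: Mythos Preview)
Your proposal is correct and matches the paper's proof essentially line-for-line: the paper also takes $\gamma = \epsilon' = \epsilon/2$, applies Lemma~\ref{lem:UN_random} with $\beta' = \beta/d$ to obtain $\delta', d', n'_0$ depending only on $\epsilon,\beta,d$, and then sets $\mu := \min\{\epsilon\alpha\delta' d/2,\,1/n'_0\}$ so that $d/\mu \geq n'_0$ and $\tfrac{1}{\gamma\alpha(d/\mu)} \leq \delta'$, before invoking Lemma~\ref{lem:comb_more_UN_comp}. The ``circularity'' you flagged and your resolution of it---get $\delta'$ as an \emph{output} of Lemma~\ref{lem:UN_random} first, then choose $\mu$---is exactly how the paper orders the constants.
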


\begin{proof}
Let $\delta'>0$ and $d', n'_0 $ be the constant and non-negative integers guaranteed by Lemma  \ref{lem:UN_random} for $\epsilon' := \frac \epsilon 2$ and $\beta' := \frac \beta d $. Let 
$\mu:=\min\{\frac{ \epsilon \alpha \delta' d} {2} ,\frac 1 {n'_0}\}$.
Let $G'_{d/\mu}$  be the $d'$-left regular $\frac \beta d$-unbalanced 
bipartite graph  with $d/\mu$ left vertices that is a 
$(\delta',1- \frac \epsilon 2)$-UN expander, guaranteed by  Lemma  \ref{lem:UN_random}. Note that  $G'_d$ can be found in constant time via brute force search.

Let $H_n := G_n \circ G'_{d/\mu}$.
Then by the definition of the routed product, $H_n$ is a $(d \cdot d')$-left regular
$\beta$-unbalanced 
bipartite graph with $n$ left vertices. Let $\gamma: = \frac \epsilon 2$, and note that by our choice of $\mu$ we have that
 $\frac{1} {\gamma \alpha (d/\mu)}  \leq \delta'$.
Therefore, applying Lemma \ref{lem:comb_more_UN_comp} with this choice of $\gamma$ gives that $H_n$ is a 
$(\delta, (1-\epsilon)\alpha)$-UN expander. 
\end{proof}

\paragraph{Acknowledgement} We thank Ronen Shaltiel for a discussion about existing constructions of bipartite expander graphs.

\bibliographystyle{alpha} 
\bibliography{bibfile}

\end{document}